\def\C{\mathbb{C}}
\def\Z{\mathbb{Z}}
\def\proj{\mathop{\rm proj}}
\def\S{\mathcal{S}}
\def\LP{\mathrm{LP}}
\def\C{\mathcal{C}}
\def\Cr{\mathrm{C}}
\newcommand{\vc}[1]{\bm{#1}}	
\newcommand{\mc}[1]{\mathcal{#1}}	
\begin{document}
\title{Consistency for 0--1 Programming}
%
%
\author{Danial Davarnia\inst{1} \and
J.\ N.\ Hooker\inst{2}
}
\authorrunning{Davarnia and Hooker}
%
\institute{Iowa state University\\
\email{davarnia@iastate.edu}\\ \and
Carnegie Mellon University\\
\email{jh38@andrew.cmu.edu}}
\maketitle              
\begin{abstract}
Concepts of consistency have long played a key role in constraint programming but never developed in integer programming (IP). Consistency nonetheless plays a role in IP as well. For example, cutting planes can reduce backtracking by achieving various forms of consistency as well as by tightening the linear programming (LP) relaxation. We introduce a type of consistency that is particularly suited for 0-1 programming and develop the associated theory. We define a 0-1 constraint set as LP-consistent when any partial assignment that is consistent with its linear programming relaxation is consistent with the original 0-1 constraint set. We prove basic properties of LP-consistency, including its relationship with Chv\'atal-Gomory cuts and the integer hull. We show that a weak form of LP-consistency can reduce or eliminate backtracking in a way analogous to $k$-consistency but is easier to achieve. In so doing, we identify a class of valid inequalities that can be more effective than traditional cutting planes at cutting off infeasible 0-1 partial assignments. 
\keywords{Consistency, resolution, constraint satisfaction, integer programming, backtracking, cutting planes}
\end{abstract}

\section{Introduction} \label{sec:introduction}

Consistency is a fundamental concept of constraint programming (CP) and an essential tool for the reduction of backtracking during search \cite{Apt03}.  Curiously, the concept never explicitly developed in mathematical programming, even though solvers rely on a similar type of branching search.  In fact, the cutting planes of integer programming can reduce backtracking by achieving various forms of consistency as well as by tightening the linear programming (LP) relaxation.   

This suggests that it may be useful to investigate the potential role of consistency concepts in mathematical programming.  We do so for 0--1 integer programming in particular.  We study how consistency relates to such integer programming ideas as the LP relaxation, Chv\'atal-Gomory cutting planes \cite{Chv73}, and the integer hull, as well as how consistency can be achieved for 0--1 inequalities.  Our main contribution is to introduce a type of consistency, {\em LP-consistency}, that seems particularly relevant to 0--1 programming, and to develop the underlying theory.  We show that achieving a form of partial LP-consistency can reduce backtracking in ways that traditional cutting planes cannot.

One way to reduce backtracking is to identify partial assignments to the variables that are {\em inconsistent} with the constraint set, meaning that they cannot occur in a feasible solution of the constraints.  Branching decisions that result in such partial assignments can then be avoided, thus removing infeasible subtrees from the search.  Unfortunately, it is generally hard to identify inconsistent partial assignments in advance.  

The essence of consistency is that it makes it easier to identify inconsistent partial assignments.  Full consistency allows one to recognize an inconsistent partial assignment by the fact that it violates a constraint that contains only the variables in the partial assignment.  Because full consistency is very hard to achieve, CP solvers rely on {\em domain consistency} (generalized arc consistency) \cite{Apt03,Dav87,Mac77,Mon74}, which reduces variable domains to the point that every value in them occurs in some feasible solution.  If domain consistency is obtained at the current node of the search tree, branching on any value in a variable's domain can lead to a feasible solution.  Domain consistency is itself hard to achieve for the entire constraint set, but can often be achieved, or partially achieved, for individual global constraints in the CP model, and this reduces backtracking significantly \cite{Reg10}.

Our approach is based on the idea that consistency can be defined with respect to a {\em relaxation} of the constraint set.  Specifically, we interpret consistency as making it possible to identify inconsistent partial assignments by checking whether they are consistent with a certain type of relaxation.  This perspective allows us to propose alternative types of consistency by using various types of relaxation.  For traditional consistency, the relaxation is obtained simply by dropping constraints that contain variables that are not in the partial assignment.  We define LP-consistency by replacing this relaxation with the LP relaxation. Thus LP-consistency ensures that any partial assignment that is consistent with the LP relaxation is inconsistent with the original constraint set.  Fortunately, one can easily check consistency with an LP relaxation simply by solving the LP problem that results from adding the partial assignment to the LP relaxation.

This poses the question of whether it is practical to achieve LP-consistency for a 0--1 problem.  There is no known practical method for achieving full LP-consistency, but we take a cue from the concept of $k$-consistency in CP \cite{Fre78,Tsa93,Hen89}, which is weaker than full consistency but sufficient to avoid backtracking if the constraints are not too tightly coupled by common variables.  While achieving traditional $k$-consistency is impractical, we define a similar property, {\em sequential LP $k$-consistency}, that can be computed for small $k$.  This in turn can avoid some backtracking that traditional cutting planes may permit, because it focuses on identifying inconsistent partial assignments rather than cutting off fractional solutions of the LP relaxation.  

A method for obtaining sequential LP $k$-consistency is suggested by our practice of defining all consistency concepts in terms of projection, as proposed in \cite{Hoo16}.   One can define sequential LP $k$-consistency, in particular, in terms of the results of lifting a problem from $k-1$ dimensions to $k$ dimensions, and then projecting it back into $k-1$ dimensions.  This same lift-and-project operation is carried out by a special case of the widely-used lift-and-project technique \cite{Bal85}, and we show that this procedure obtains sequential LP $k$-consistency.  

We begin below by defining and illustrating basic consistency concepts and showing how they can be cast in terms of projection.  We also indicate how consistency can eliminate or reduce backtracking.  We review some prior work showing that an inference method of propositional logic, resolution, can achieve consistency for 0--1 problems, and that a weak form of resolution, input resolution, can generate all Chv\'{a}tal-Gomory cuts for a set of logical clauses.   

At this point we introduce LP-consistency and show some elementary properties, namely that consistency implies LP-consistency, and a constraint set that describes the integer hull is necessarily LP-consistent.  Yet LP-consistency is a concept that does not occur in polyhedral theory, and an LP-consistent constraint set need not describe the integer hull.  While the facet-defining inequalities that describe the integer hull are generally regarded as the strongest valid inequalities, we show that they can be weaker than a non-facet-defining inequality that achieves LP-consistency, in the sense that they exclude fewer inconsistent 0--1 (partial) assignments.  We further elaborate on connections with cutting plane theory by showing that a 0--1 partial assignment is consistent with the LP relaxation if and only if it violates no logical clause that is a Chv\'{a}tal-Gomory (C-G) cut, and a 0--1 problem is LP-consistent if and only if all of its implied logical clauses are C-G cuts.  We also note that while input resolution derives C-G cuts, it does not achieve LP-consistency.

The remainder of the paper defines and develops the concept of sequential LP $k$-consistency.  It shows that achieving sequential LP $k$-consistency for $k=1,\ldots, n$ (where $n$ is the number of variables) avoids backtracking altogether for branching order $x_1, x_2, \dotsc, x_n$. In practice, one would achieve sequential LP $k$-consistency for a few small values of $k$.  We then prove that a restricted version of the well-known lift-and-project procedure \cite{Bal85} achieves sequential $k$-consistency for a given $k$.  Finally, we illustrate how achieving sequential LP $k$-consistency even for $k=2$ can avoid backtracking that is permitted by traditional separating cuts.

\section{Consistency and Projection}

To define consistency, it is convenient to adopt basic terminology as follows.  The {\em domain} $D_j$ of a variable $x_j$ is the set of values that can be assigned to $x_j$.  A {\em constraint} $C$ is an object that {\em contains} some set $\{x_1,\ldots,x_k\}$ of variables, such that any given assignment of values to $(x_1,\ldots,x_k)$ either {\em satisfies} or {\em violates} $C$.  Thus a constraint is satisfied or violated only when all of its variables have been assigned values.  An assignment to $x$ satisfies a constraint set $\S$ when it satisfies all the constraints in $\S$.
A list of symbols defined hereafter appears in Table~\ref{ta:symbols}.

\begin{table}[b]
\vspace{-4ex}
{\small
    \centering
    \caption{List of symbols.} \label{ta:symbols}
    \hspace{20ex}
    \begin{tabular}{l@{\hspace{2ex}}l}
    \ \\
    \hline
    $x_J$ & tuple of variables $x_j$ for $j\in J$\\
    $D(\S)$ & satisfaction set of constraint set $\S$ \\
    $D_J(\S)$ & set of assignments to $x_J$ that are consistent with $\S$ \\
    $D_j$ & domain of $x_j$ \\
    $D(\S)|_J$ & projection of $D(\S)$ onto $x_J$ \\
    $\S_J$ & set of constraints in $\S$ that contain only variables in $x_J$ \\
    $S_{\LP}$ & LP relaxation of 0--1 constraint set $\S$ \\
    $D_J(\S_{\LP})$ & set of 0--1 assignments to $x_J$ that are consistent with $\S_{\LP}$ \\
    $\S_{\Cr}$ & set of clausal inequalities implied by individual constraints of $\S$ \\
    $J_k$ & $\{1,\ldots,k\}$ \\
    \hline
    \end{tabular}
}
\end{table}

Let $x_J$ be the tuple containing the variables in $\{x_j\;|\;j\in J\}$ for $J\subseteq N=\{1,\ldots,n\}$.  A {\em partial assignment} to $x$ is an assignment of values to $x_J$ for some $J\subseteq N$.
We can now define a consistent partial assignment and a consistent constraint set.

\begin{definition}
	Given a constraint set $\S$, a partial assignment $x_J=v_J$ is {\em consistent with $\S$} if $\S\cup\{x_J=v_J\}$ is feasible.  
\end{definition}
Since it is hard in general to determine whether $\S\cup\{x_J=v_J\}$ is feasible, it is hard to identify which partial assignments are consistent with $\S$.  Consistent constraint sets are defined so that it is easy to identify which partial assignments are consistent with them.  

\begin{definition}
A constraint set $\S$ is {\em consistent} if every partial assignment to $x$ that violates no constraint in $\S$ is consistent with $\S$.
\end{definition}
The contrapositive is perhaps more intuitive: $\S$ is consistent when every partial assignment that is inconsistent with $\S$ violates some individual constraint in $\S$.   Thus a consistent constraint set can be viewed as one in which implied constraints are made explicit, in the sense that every inconsistent partial assignment is explicitly ruled out by some constraint in the set.

Since full consistency is generally hard to achieve, the constraint programming community has found various weaker forms of consistency to be more useful.  By far the most popular is domain consistency, also known as generalized arc consistency \cite{Apt03,Dav87,Mac77,Mon74}.  
\begin{definition}
A constraint set $\S$ is {\em domain consistent} if $x_j=v_j$ is consistent with $\S$ for all $v_j\in D_j$ and all variables $x_j$.  
\end{definition}
That is, every value in the domain of a variable $x_j$ is assigned to $x_j$ in some feasible solution of $\S$.  A consistent constraint set is necessarily domain consistent.

\begin{example} \label{ex:0}
Suppose that $\S$ is the constraint set 
\[
\begin{array}{cccccccc}
x_1 & + & x_2 &   &     & + & x_4 & \geq 1 \\
x_1 & - & x_2 & + & x_3 &   &     & \geq 0 \\
x_1 &   &     &   &     & - & x_4 & \geq 0 \\
\multicolumn{7}{l}{x_j\in \{0,1\}, \;\mbox{all}\; j}
\end{array}
\]
The feasible solutions $(x_1,\ldots, x_4)$ of $\S$ are listed below:
\[
\begin{array}{c@{\hspace{5ex}}c@{\hspace{5ex}}c}
(0,1,1,0) & (1,0,1,0) & (1,1,0,1) \\
(1,0,0,0) & (1,0,1,1) & (1,1,1,0) \\
(1,0,0,1) & (1,1,0,0) & (1,1,1,1)
\end{array}
\]
Set $\S$ is not consistent because, for instance, the partial assignment $(x_1,x_2)=(0,0)$ violates no constraint in $\S$ but is inconsistent with $\S$ due to the fact that $(x_1,x_2)=(0,0)$ in none of the feasible solutions.  On the other hand, $\S$ is domain consistent because $x_j=0$ and $x_j=1$ occur in some feasible solution for each $j$.
\end{example}

The various consistency concepts are more easily defined in terms of projection, as proposed in \cite{Hoo16}.
Let $D(\S)$ be the satisfaction set of $\S$; that is, the set of assignments to $x$ that satisfy $\S$.  Also let $D_J(\S)$ be the set of partial assignments $x_j=v_j$ that are consistent with $\S$, so that $D_J(\S)= \{v_J\;|\;\S\cup\{x_J=v_J \}\;\mbox{is feasible}\}$.  The {\em projection} of $D(\S)$ onto $x_J$, which we may write $D(\S)|_J$, is $\{x_J\;|\;x\in D(\S)\}$.  Note that the projection is identical to the set of assignments to $x_J$ that are consistent with $\S$, so that $D(S)|_J=D_J(\S)$.

This last observation allows us to define consistency in terms of projection.  Let $\S_J$ be the set of constraints in $\S$ whose variables belong to $x_J$.  Then $D_J(\S_J)$ is the set of assignments to $x_J$ that violate no constraints in $\S$.  We assume that ${\S}$ contains the {\em in-domain constraints} $x_j\in D_j$ for all $j\in N$.

\begin{proposition}
A constraint set $\S$ is consistent if and only if $D_J(\S_J)=D(\S)|_J$ for all $J\subseteq N$.  Equivalently, $\S$ is consistent if and only if all 0--1 partial assignments $x_J=v_J$ that are consistent with $\S_J$ are consistent with $\S$.  In addition, $\S$ is domain consistent if and only if  $D_j=D(\S)|_{\{j\}}$ for all $j\in N$.
\end{proposition}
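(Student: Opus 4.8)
The proposition is essentially an unwinding of the definitions, so the plan is to verify the three equivalences by direct set-containment arguments, using throughout the already-noted identity $D(\S)|_J = D_J(\S)$ and the standing assumption that $\S$ contains the in-domain constraints $x_j\in D_j$.

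For the first equivalence I would argue both directions. Forward: assume $\S$ is consistent and fix $J\subseteq N$. If $v_J\in D_J(\S_J)$, then since a partial assignment on $x_J$ cannot violate a constraint that contains a variable outside $x_J$, the assignment $x_J=v_J$ violates no constraint of $\S$ at all; by consistency it is then consistent with $\S$, i.e. $v_J\in D_J(\S)=D(\S)|_J$. Conversely, if $v_J\in D(\S)|_J$ then $v_J$ extends to a full solution of $\S$, which in particular satisfies every constraint of $\S_J$ (these involve only the variables $x_J$), so $v_J\in D_J(\S_J)$. Hence $D_J(\S_J)=D(\S)|_J$. Reverse: suppose this equality holds for every $J$, and let $x_J=v_J$ be a partial assignment that violates no constraint of $\S$. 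Exactly as above, ``violates no constraint of $\S$'' for an assignment on $x_J$ is the same as ``violates no constraint of $\S_J$'', which (the in-domain constraints $x_j\in D_j$ with $j\in J$ being among those of $\S_J$) is the same as $v_J\in D_J(\S_J)$. Then $v_J\in D(\S)|_J=D_J(\S)$, so $x_J=v_J$ is consistent with $\S$, giving consistency of $\S$.

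The ``equivalently'' restatement requires no new work: ``$x_J=v_J$ consistent with $\S_J$'' is by Definition~1 exactly ``$v_J\in D_J(\S_J)$'', and ``$x_J=v_J$ consistent with $\S$'' is ``$v_J\in D_J(\S)=D(\S)|_J$'', so the reformulation is a verbatim translation of the set equality. For the domain-consistency claim, I would first observe that $D(\S)|_{\{j\}}\subseteq D_j$ always holds, since every element of $D(\S)$ satisfies the in-domain constraint $x_j\in D_j$. Thus $D_j=D(\S)|_{\{j\}}$ is equivalent to the reverse inclusion $D_j\subseteq D(\S)|_{\{j\}}$, i.e. to the statement that every $v_j\in D_j$ lies in $D(\S)|_{\{j\}}=D_{\{j\}}(\S)$, which is precisely the definition of domain consistency.

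I do not anticipate a real obstacle here; the proof is pure definition-chasing. The only points that need care are (i) the remark that a partial assignment on $x_J$ can violate only constraints whose variables all lie in $x_J$, so that ``violates no constraint of $\S$'' and ``$v_J\in D_J(\S_J)$'' coincide, and (ii) invoking the assumption that $\S$ contains all in-domain constraints, which is exactly what supplies the otherwise-missing inclusions in both the general statement and the domain-consistency part.
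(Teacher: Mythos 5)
Your proof is correct and is exactly the definitional unwinding the paper intends: the paper states this proposition without any written proof, treating it as immediate from the definitions, and your argument supplies precisely the missing details (the observation that a partial assignment on $x_J$ can only violate constraints of $\S_J$, and the use of the in-domain constraints to get the easy inclusions). Nothing further is needed.
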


\begin{example} \label{ex:00}
If $\S$ is as in Example~\ref{ex:0}, $\S$ is not consistent because, for instance, the satisfaction set $D_{\{1,2\}}(\S_{\{1,2\}})=\{(0,0),(0,1),(1,0),(1,1)\}$ of $\S_{\{1,2\}}=\{x_1\in \{0,1\}, \; x_2\in\{0,1\}\}$ is different from the projection onto $(x_1,x_2)$ of $D(\S)$, which is $D(\S)|_{\{1,2\}}=\{(0,1),(1,0),(1,1)\}$.  However, $\S$ is domain consistent because $D_j = \{0,1\} = D(\S)_{\{j\}}$ for all $j$.
\end{example}

Consistency can be understood as defined with respect to a relaxation of $\S$.  For classical consistency, the relaxation is $\S_J$, obtained by omitting constraints from $\S$.  We will later define consistency with respect to the linear programming relaxation of $\S$.

\section{Consistency and Backtracking}

It is well known that consistency is closely related to backtracking.  We note first that branching can find a feasible solution for a fully consistent constraint set without backtracking, assuming of course that the constraints have a solution.  Suppose we branch on variables $x_1,\ldots, x_n$ in that order.  Each node in level $j$ of the branching tree corresponds to a partial assignment $(x_1,\ldots,x_{j-1})=(v_1,\ldots,v_{j-1})$.  We branch on $x_j$ at the node by assigning to $x_j$ each value $v_j\in D_j$ for which the partial assignment $(x_1,\ldots,x_j)=(v_1,\ldots,v_j)$ violates no constraint in $\S$.  Due to the consistency of $\S$, this partial assignment is consistent with $\S$ for at least one value $v_j\in D_j$.  Thus branching can continue to the bottom of the tree with no need to backtrack.


A weaker form of consistency avoids backtracking if there is limited coupling of variables.  Let the directed {\em dependency graph} $G$ of $\S$ for the ordering $1,\ldots, n$ consist of vertices corresponding to variables $x_j$ and directed edges $(x_i,x_j)$ whenever $i<j$,  and $x_i$ and $x_j$ occur in a common constraint of $\S$.  The {\em width} of $G$ for the ordering $1,\ldots,n$ is the maximum out-degree of the nodes of $G$. 
\begin{definition}
A constraint set $\S$ is \mbox{\em $k$-consistent} if $D_J(\S_J)=D_{J\cup\{j\}}(\S_{J\cup{\{j]}})|_J$ for all \mbox{$J\subseteq N$} with $|J|=k-1$ and all $j\in N\setminus J$.  $\S$ is {\em strongly \mbox{$k$-consistent}} if it is $j$-consistent for $j=1,\ldots,k$.  
\end{definition}
The following is proved in \cite{Fre82}.
\begin{proposition} \label{prop:freuder}
Let $G$ be the directed dependency graph of constraint set $\S$ for the ordering $1,\ldots, n$.  Then if the branching order is $x_1, \ldots, x_n$, $\S$ can be solved without backtracking if $\S$ is strongly \mbox{$k$-consistent} and $G$ has width less than $k$.
\end{proposition}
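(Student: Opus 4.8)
The plan is to reduce the backtrack-free claim to a one-step extension property for the branching order $x_1,\ldots,x_n$, and then to derive that property from strong $k$-consistency together with the width bound. I would call a partial assignment $(v_1,\ldots,v_{j-1})$ \emph{locally consistent} if it violates no constraint of $\S$; because $\S$ contains the in-domain constraints, this coincides with $v_{J_{j-1}}\in D_{J_{j-1}}(\S_{J_{j-1}})$. Branching on $x_1,\ldots,x_n$ in order always maintains a locally consistent partial assignment, and it is forced to backtrack exactly when, at some depth $j$, a locally consistent $(v_1,\ldots,v_{j-1})$ has no value $v_j\in D_j$ for which $(v_1,\ldots,v_j)$ is locally consistent. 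So the statement reduces to showing, for every $j\in N$, that every locally consistent assignment to $x_1,\ldots,x_{j-1}$ extends to a locally consistent assignment to $x_1,\ldots,x_j$; in projection form this is
\[
D_{J_{j-1}}(\S_{J_{j-1}}) \;\subseteq\; D_{J_j}(\S_{J_j})\big|_{J_{j-1}}
\]
(the reverse inclusion is immediate). With this in hand, a trivial induction on depth---base case the empty assignment, which is locally consistent since $\S$ is feasible---shows the search descends to depth $n$ without backtracking, and the length-$n$ assignment it reaches, violating no constraint, is a feasible solution.

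For the extension step I would fix $j$ and a locally consistent $(v_1,\ldots,v_{j-1})$, and let $W_j\subseteq\{1,\ldots,j-1\}$ be the set of earlier variables occurring with $x_j$ in some common constraint of $\S$; the width hypothesis ensures $|W_j|\le k-1$ for every $j$. Since $(v_1,\ldots,v_{j-1})$ violates no constraint of $\S$ and every constraint of $\S_{W_j}$ is such a constraint, the restriction $v_{W_j}=(v_i)_{i\in W_j}$ belongs to $D_{W_j}(\S_{W_j})$. Because $|W_j|+1\le k$ and $\S$ is strongly $k$-consistent, $\S$ is $(|W_j|+1)$-consistent; applying that definition with index set $W_j$ and the new variable $x_j$ yields $D_{W_j}(\S_{W_j})=D_{W_j\cup\{j\}}(\S_{W_j\cup\{j\}})|_{W_j}$, so some $v_j\in D_j$ satisfies $(v_{W_j},v_j)\in D_{W_j\cup\{j\}}(\S_{W_j\cup\{j\}})$, i.e.\ violates no constraint of $\S$ on the variable set $W_j\cup\{j\}$. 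To finish I would check $(v_1,\ldots,v_j)$ is locally consistent: a constraint $C$ of $\S$ all of whose variables lie in $\{1,\ldots,j\}$ is the only kind it could violate, and if $C$ omits $x_j$ then its variables lie in $\{1,\ldots,j-1\}$, so $C$ is satisfied because $(v_1,\ldots,v_{j-1})$ is locally consistent, whereas if $C$ contains $x_j$ then all its remaining variables lie in $W_j$ by definition, so $C$ is a constraint on $W_j\cup\{j\}$ and is satisfied by the previous step.

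The main obstacle---and the only place the two hypotheses genuinely interact---is this extension step: I need to notice that the only possible obstructions to extending $x_j$ come from constraints on $W_j\cup\{j\}$, that local consistency of the full prefix restricts to local consistency on $W_j$, and, crucially, that strong $k$-consistency furnishes one single value of $x_j$ jointly compatible with all of $v_{W_j}$---not merely a value compatible with each predecessor in isolation---precisely because $|W_j|\le k-1$. The remaining points are routine: the degenerate case $j=1$ (where $W_1=\emptyset$ and one uses $1$-consistency plus feasibility of $\S$, which strong $k$-consistency forces anyway), and the standing assumption that $\S$ carries the in-domain constraints, which is what makes ``violates no constraint of $\S$'' the same as membership in $D_J(\S_J)$.
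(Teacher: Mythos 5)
The paper offers no proof of this proposition at all---it is stated as a known result and attributed to Freuder's 1982 paper---so there is nothing internal to compare against. Your argument is, in substance, exactly Freuder's original proof, and it is correct: the reduction to a one-step extension property along the fixed ordering, the observation that the only constraints that can obstruct the assignment of $x_j$ are those on $W_j\cup\{j\}$ where $W_j$ is the set of earlier neighbours of $x_j$, and the single application of $(|W_j|+1)$-consistency to obtain a value of $x_j$ jointly compatible with $v_{W_j}$ are precisely the intended chain of reasoning. (Your aside that the empty assignment is locally consistent ``since $\S$ is feasible'' is unnecessary---the empty assignment vacuously violates no constraint---but harmless.)

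The one point that deserves attention is your claim that ``the width hypothesis ensures $|W_j|\le k-1$.'' That is true for Freuder's notion of width, namely the maximum over $j$ of the number of \emph{earlier} neighbours of $x_j$ in the ordering, and that is the notion your proof needs. The paper, however, defines the width of $G$ as the maximum \emph{out}-degree, with edges directed $(x_i,x_j)$ for $i<j$; read literally, this bounds the number of \emph{later} neighbours of each vertex, which is a different quantity, and under that literal reading the proposition is false. For instance, take three binary constraints coupling $x_1$ with each of $x_2,x_3,x_4$ and order the variables $x_2,x_3,x_4,x_1$: every out-degree is $1$, strong $2$-consistency can hold, and yet the three pairwise constraints need not admit a common value of $x_1$, forcing a backtrack at the last level. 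So your proof is correct for the intended (Freuder) definition of width; the mismatch lies in the paper's stated definition, not in your argument, but you should say explicitly which notion of width you are using rather than silently substituting the correct one.
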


A still weaker form of consistency avoids backtracking if the branching order is given.  It is not necessary to consider all sets $J$ and all indices $j\not\in J$, but only variables on which we have branched. We therefore define a form of $k$-consistency that assumes the branching order is $x_1, \ldots, x_n$.   Let $J_k=\{1, \ldots, k\}$.
\begin{definition}
A 0--1 constraint set $\S$ is {\em sequentially $k$-consistent if $D_{J_{k-1}}(\S_{J_{k-1}})=D_{J_k}(\S_{J_{k}})|_{J_{k-1}}$.}  
\end{definition}
Thus $\S$ is sequentially $k$-consistent if for every partial assignment $(x_1,\ldots,x_{k-1})=(v_1,\ldots,v_{k-1})$ that violates no constraint in $S$, there is a value $v_k$ in $D_k$ such that $(x_1,\ldots,x_k)=(v_1,\ldots,v_k)$ violates no constraint in $\S$.  The following is easy to show.  
\begin{proposition} \label{prop:orderconsistent}
If the branching order is $x_1,\ldots, x_n$, constraint set $\S$ can be solved without backtracking if $\S$ is sequentially $k$-consistent for $k=1,\ldots,n$.
\end{proposition}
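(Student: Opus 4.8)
The plan is to prove the proposition by a routine induction on the depth of the branching tree, showing that the search descends a single root-to-leaf path and never has to retreat. First I would pin down the branching procedure exactly as in the discussion of full consistency preceding Proposition~\ref{prop:freuder}: at a node at level $k-1$ carrying a partial assignment $(x_1,\ldots,x_{k-1})=(v_1,\ldots,v_{k-1})$ that violates no constraint of $\S$, the search branches on $x_k$ by creating a child for each $v_k\in D_k$ such that $(x_1,\ldots,x_k)=(v_1,\ldots,v_k)$ again violates no constraint of $\S$, and backtracking is forced precisely when a reached node at a level below $n$ has no such child. I would also recall the fact, already noted in Section~2, that since a constraint is violated only once all of its variables are assigned, a partial assignment to $x_J$ violates no constraint of $\S$ exactly when it violates no constraint of $\S_J$, i.e.\ exactly when it lies in $D_J(\S_J)$.

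The inductive step is then a direct unwinding of the definition of sequential $k$-consistency. Suppose the search has reached a node at level $k-1<n$ whose partial assignment $v_{J_{k-1}}$ violates no constraint of $\S$, so that $v_{J_{k-1}}\in D_{J_{k-1}}(\S_{J_{k-1}})$ by the remark above. Sequential $k$-consistency gives $D_{J_{k-1}}(\S_{J_{k-1}})=D_{J_k}(\S_{J_k})|_{J_{k-1}}$, hence there is a value $v_k$ with $(v_{J_{k-1}},v_k)\in D_{J_k}(\S_{J_k})$; because $\S$, and therefore $\S_{J_k}$, contains the in-domain constraint $x_k\in D_k$, this $v_k$ lies in $D_k$, and $(x_1,\ldots,x_k)=(v_1,\ldots,v_k)$ violates no constraint of $\S$. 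Thus the node has at least one child and the search advances to level $k$ without backtracking. For the base case I would note that the root carries the empty assignment, for which one needs only that $\S$ be feasible; I would remark in passing that feasibility actually follows from the hypothesis, since sequential $k$-consistency forces $D_{J_k}(\S_{J_k})\neq\emptyset$ whenever $D_{J_{k-1}}(\S_{J_{k-1}})\neq\emptyset$, and $D_{\emptyset}(\S_{\emptyset})\neq\emptyset$. Finally, when the search reaches level $n$ the partial assignment $(x_1,\ldots,x_n)=(v_1,\ldots,v_n)$ violates no constraint of $\S$; since every constraint of $\S$ involves only variables among $x_1,\ldots,x_n$, it satisfies all of $\S$ and is therefore a feasible solution, so the path the search followed terminates in a solution.

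I do not anticipate a real obstacle here: the proposition is ``easy to show'' precisely because its content is already encoded in the definition of sequential $k$-consistency, and the argument is just the propagation of that definition down a single branch. The only points that call for a sentence of care are the equivalence between ``violates no constraint of $\S$'' and membership in $D_J(\S_J)$ for partial assignments, and the minor bookkeeping around feasibility at the root; both are dealt with as indicated above.
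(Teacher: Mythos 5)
Your proof is correct and follows essentially the approach the paper intends: the paper states this proposition without proof (``easy to show''), but its proof of the analogous result for sequential LP $k$-consistency in Section~\ref{sec:partial consistency} is exactly the same inductive descent down a single branch, with your $D_J(\S_J)$ bookkeeping replaced by consistency with $\S_{\LP}$. Your added remarks on the root case and on feasibility following from the hypothesis are minor but sound.
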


\begin{example}\label{ex:orderconsistency}
Let $\S=\{3x_1+2x_2\geq 1,\; -x_1+2x_2\geq 0, \; x\in\{0,1\}^2\}$.  Proposition~\ref{prop:orderconsistent} implies that we can avoid backtracking by branching in the order $x_1,x_2$, because $\S$ is sequentially 1-consistent and sequentially 2-consistent.  The lack of backtracking does not follow from Proposition~\ref{prop:freuder}, however, because $\S$ is not \mbox{2-consistent}, and its dependency graph has width 1 for the ordering 1,2.  $\S$ is not 2-consistent because the partial assignment $x_2=0$ violates no constraints and has no extension to a consistent assignment $(x_1,x_2)=(v_1,0)$.  
\end{example}

Even domain consistency suffices to avoid backtracking if it is achieved at every node.  Supposing that a given node corresponds to a partial assignment as above, let $\S'=\S\cup \{(x_1,\ldots, x_{j-1})=(v_1,\ldots,v_{j-1})\}$.  Then if $\S'$ is domain consistent, $x_j$ can be assigned any value in its domain to obtain an assignment $(x_1,\ldots,x_j)=(v_1,\ldots, v_j)$ that is consistent with $\S'$.  The process can continue without backtracking if domain consistency is similarly achieved at subsequent nodes.

\section{Consistency and Resolution}

Previous research has shown that the resolution procedure of propositional logic achieves consistency for a 0--1 constraint set.  

First, some definitions.  A {\em literal} $\ell_j$ is a proposition of the form $x_j$ or $\neg x_j$.  A logical {\em clause} is a disjunction $\bigvee_{j\in J} \ell_j$ of literals, which we denote by $\ell(J)$, where $J$ is possibly empty.   Given clauses $\ell(J)$ and $\ell(J')$, the former {\em absorbs} the latter if $J\subseteq J'$.  For example, $x_1\vee \neg x_2$ absorbs $x_1\vee \neg x_2\vee x_3$.  One clause logically implies another if and only if the one absorbs the other.  

Given clauses $\ell(J_1)\vee x_k$ and $\ell(J_2)\vee \neg x_k$, where $k\not\in J_1\cup J_2$, the {\em resolvent} of the clauses is $\ell(J_1\cup J_2)$.  For example, the resolvent of $x_1\vee x_2\vee x_4$ and $x_1\vee \neg x_3\vee \neg x_4$ is $x_1\vee x_2\vee\neg x_3$, while the clauses $x_1\vee \neg x_2$ and $\neg x_1\vee x_2$ do not have a resolvent.  A resolvent is logically implied by the conjunction of its two parents but is absorbed by neither.  

Given a clause set $\C$, a {\em resolution proof} of clause $C_m$ from $\C$ is a sequence of clauses $C_1,\ldots, C_m$ such that $C_m$ is the resolvent of two previous clauses in the sequence, and each $C_i$ for $i=1,\ldots, m-1$ either belongs to $\C$ or is the resolvent of two earlier clauses in the sequence. It can be assumed that a resolvent is not generated when it is absorbed by a previous clause in the sequence. An {\em input proof} of $C_m$ is a resolution proof in which at least one of the parents of each resolvent belongs to $\C$ \cite{Cha70}.  There is a resolution proof of any clause that is logically implied by $\mathcal{C}$ \cite{Qui52,Qui55}, but not necessarily an input proof.

A 0--1 constraint set $\S$ {\em logically implies} 0--1 constraint set $\S'$ when all 0--1 points that satisfy $\S$ also satisfy $\S'$.  $\S$ and $\S'$ are {\em logically equivalent} when they logically imply each other.  A logical clause 
\[
\bigvee_{j\in J^+} \hspace{-1ex} x_j \vee \bigvee_{j\in J^-} \hspace{-1ex} \neg x_j
\]
is {\em represented} by the 0--1 inequality 
\[
\sum_{j\in J^+} \hspace{-0.5ex} x_j + \sum_{j\in J^-} \hspace{-0.1ex} (1-x_j)\geq 1
\]
A 0--1 inequality is {\em clausal} when it represents a clause.  It is clear that a 0--1 inequality is logically equivalent to the set of clausal inequalities it implies.  Thus if we let $\S_{\Cr}$ be the set of clausal inequalities that are implied by some inequality in $\S$, then $\S$ is logically equivalent to $\S_{\Cr}$.  It is convenient to say that a clausal inequality has a resolution proof from $\S_{\Cr}$ if the clause it represents has a resolution proof from the clauses represented by $\S_{\Cr}$. It is shown in \cite{Hooker12} that resolution on clausal inequalities achieves consistency.  
\begin{proposition} \label{prop:resolution}
If 0--1 constraint set $\S$ is augmented with all clausal inequalities that have resolution proofs from $\S_{\Cr}$, the resulting constraint set is consistent.  
\end{proposition}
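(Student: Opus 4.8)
The plan is to show that after augmenting $\S$ with all clausal inequalities having resolution proofs from $\S_{\Cr}$, the resulting set $\S^*$ is consistent — i.e., every partial assignment $x_J = v_J$ that violates no constraint of $\S^*$ is consistent with $\S^*$. Since $\S^*$ is logically equivalent to $\S$ (resolvents are logically implied), it suffices to produce a feasible completion of $x_J = v_J$ satisfying $\S$. I would argue the contrapositive: if $x_J = v_J$ is inconsistent with $\S$, then the partial assignment violates some clausal inequality in $\S^*$.

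First I would reduce to the purely clausal setting. By the remarks preceding the statement, $\S$ is logically equivalent to $\S_{\Cr}$, and a partial 0--1 assignment is consistent with $\S$ iff it is consistent with $\S_{\Cr}$; moreover a partial assignment violates a clausal inequality exactly when it falsifies every literal of the corresponding clause (which requires all the clause's variables to be in $J$). So the whole question becomes: given a clause set $\C = \S_{\Cr}$, its resolution closure $\C^*$ (all clauses with resolution proofs from $\C$, which we may take closed under resolution and discarding absorbed clauses), and a partial assignment fixing the variables in $J$, if no clause of $\C^*$ is falsified by $x_J = v_J$ then $x_J = v_J$ extends to a model of $\C$.

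The key step is the extension argument, which I would carry out by induction on the number of unfixed variables $|N \setminus J|$, mimicking the standard completeness proof of resolution. If $N \setminus J = \emptyset$, then no clause of $\C \subseteq \C^*$ is falsified, so $x_J = v_J$ is itself a model. Otherwise pick an unfixed variable $x_k$. Consider the set of clauses of $\C^*$ all of whose variables lie in $J \cup \{k\}$; partition the ones mentioning $x_k$ into those containing the literal $x_k$ and those containing $\neg x_k$. I claim at least one value $v_k \in \{0,1\}$ can be assigned to $x_k$ so that the extended assignment $x_{J \cup \{k\}} = (v_J, v_k)$ falsifies no clause of $\C^*$. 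If not, then there is a clause $\ell(J_1) \vee x_k \in \C^*$ with $\ell(J_1)$ falsified by $v_J$ (blocking $x_k = 0$) and a clause $\ell(J_2) \vee \neg x_k \in \C^*$ with $\ell(J_2)$ falsified by $v_J$ (blocking $x_k = 1$). Their resolvent $\ell(J_1 \cup J_2)$ is logically implied, hence either lies in $\C^*$ or is absorbed by some clause of $\C^*$; in either case $\C^*$ contains a clause absorbing $\ell(J_1 \cup J_2)$, all of whose variables are in $J$ and all of whose literals are falsified by $v_J$ — contradicting the hypothesis that $x_J = v_J$ falsifies no clause of $\C^*$. So the extension exists, and applying the induction hypothesis to $J \cup \{k\}$ (noting $\C^*$ restricted to clauses over $J\cup\{k\}$ falsifies nothing) yields a full model of $\C$.

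I expect the main obstacle to be the bookkeeping around absorption and the closure assumptions: one must be careful that ``all clausal inequalities with resolution proofs from $\S_{\Cr}$'' is genuinely closed under taking resolvents and that discarding absorbed resolvents (as the excerpt allows) does not lose the clause needed in the contradiction step — this is why I phrase it as ``some clause of $\C^*$ absorbs the resolvent,'' which still gives a falsified clause over variables in $J$. A secondary point to handle cleanly is the translation between ``violates a constraint of $\S^*$'' (which concerns the whole inequality set $\S^*$, not just its clausal part) and ``falsifies a clause of $\C^*$''; here I would invoke that every inequality of $\S^*$ is logically equivalent to the clauses it implies, all of which are in $\C^*$, so violating an inequality forces falsifying one of its implied clauses in $\C^*$. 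Everything else is routine.
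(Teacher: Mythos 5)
The paper does not actually prove this proposition; it defers to the cited reference \cite{Hooker12}, and your argument is precisely the standard one given there: reduce to the clausal set $\S_{\Cr}$, then show by induction on the unfixed variables that a partial assignment falsifying no clause of the resolution closure extends variable by variable, using the fact that two blocking clauses on $x_k$ would resolve to a clause over $x_J$ falsified by $v_J$ (and your handling of absorbed resolvents is the right way to make that step survive the paper's pruning convention). The induction step is sound because $\ell(J_1)$ and $\ell(J_2)$, being both falsified by $v_J$, cannot clash, so the resolvent exists and is non-tautological. The only point worth making explicit is that your argument needs every clause of $\S_{\Cr}$ itself to be present in the augmented set (i.e., $\C\subseteq\C^*$), which requires reading ``has a resolution proof from $\S_{\Cr}$'' as including trivial one-clause proofs; under the paper's literal definition a clause of $\S_{\Cr}$ that is never a resolvent would be omitted, and the proposition can then fail (e.g., $\S=\{2x_1+x_2\geq 2,\;x\in\{0,1\}^2\}$, where $x_1\geq 1$ is in $\S_{\Cr}$ but is not a resolvent and $x_1=0$ violates no constraint). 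This is a definitional wrinkle in the statement rather than a flaw in your proof, but it deserves a sentence.
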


\begin{example}
If $\S$ is the constraint set of Example~\ref{ex:orderconsistency}, $\S_{\Cr}$ contains the clausal inequalities $x_1+x_2\geq 1$ and $-x_1+x_2\geq 0$.  One clausal inequality, namely $x_2\geq 1$, has a resolution proof from $\S_{\Cr}$, and adding this inequality to $\S$ yields a consistent constraint set.  
\end{example}

A second example illustrates how a traditional cutting plane can serve the dual purpose of tightening the linear programming (LP) relaxation and achieving consistency.  Let the LP relaxation of $\S=\{Ax\geq b, \; x\in \{0,1\}^n\}$ be $\S_{\LP}=\{Ax\geq b,\; x\in [0,1]^n\}$.

\begin{example} \label{ex:3constraints}
Suppose that $\S$ is the constraint set of Example~\ref{ex:0}.  
In this case, $\S$ and $\S_\Cr$ are identical.  Resolution yields two additional clausal inequalities, $x_1+x_2\geq 1$ and $x_1+x_3\geq 1$.  By Proposition~\ref{prop:resolution}, adding these inequalities to $\S$ achieves consistency.  These inequalities are also traditional cutting planes for $\S$, in particular Chv\'{a}tal-Gomory (C-G) cuts.  The first cuts off two fractional vertices $(x_1,\ldots,x_4)=(\frac{1}{3},\frac{1}{3},0,\frac{1}{3}), (\frac{1}{2},0,0,\frac{1}{2})$ of the polytope described by $\S_{\LP}$, and the second cuts off $(\frac{1}{2},\frac{1}{2},0,0)$ as well.  The inequalities therefore serve the dual purpose of achieving consistency and tightening the LP relaxation.  As it happens, adding both resolvents yields an integral polytope, but we will see that a consistent constraint does not in general describe an integral polytope. 
\end{example}

Input proofs from $\S_{\Cr}$ do not necessarily achieve consistency, but they derive all clausal C--G cuts for $\S_{\Cr}$.  
The following is proved in \cite{Hoo89}.
\begin{proposition}
Given a 0--1 constraint set $\S$, a clausal inequality is a C-G cut for $\S_{\Cr}\cup\{x\in [0,1]\}$ if and only if it has an input proof from $\S_{\Cr}$.
\end{proposition}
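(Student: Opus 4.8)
The plan is to prove the two implications separately, each time converting between a single resolution step and a single Chv\'atal--Gomory (C--G) rounding. Throughout, ``C--G cut'' means a \emph{rank-one} cut: one nonnegative aggregation of inequalities of $\S_{\Cr}\cup\{0\le x\le 1\}$ whose coefficient vector is integral, followed by rounding up the right-hand side; and I use the usual convention that such a cut genuinely cuts, so that the aggregated right-hand side is non-integral and the (integral) coefficient vector of the cut is exactly that of the clause in question.

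\emph{The direction ``input proof $\Rightarrow$ C--G cut''.} Since every resolvent in an input proof has a parent in $\S_{\Cr}$, the derivation tree of the target clause is a caterpillar, so the proof may be rewritten as a chain $D_1,\dots,D_t=\ell(J)$ with $D_1=\mathrm{res}(F,G)$ for $F,G\in\S_{\Cr}$ and $D_{i+1}=\mathrm{res}(D_i,E_i)$ for $E_i\in\S_{\Cr}$; I would induct on $t$. The base case is the observation that a single resolution step is itself a rank-one C--G cut: given $D=\mathrm{res}(P_1,P_2)$, aggregate the clausal inequalities $P_1$ and $P_2$ with weight $\tfrac12$ each, and then add $\tfrac12(x_j\ge 0)$ or $\tfrac12(1-x_j\ge 0)$ for every coordinate whose coefficient in this half-sum came out to $\pm\tfrac12$; a short calculation shows the result has coefficient vector exactly $\mathrm{coeff}(D)$ and right-hand side exactly $\mathrm{rhs}(D)-\tfrac12$, which rounds up to $\mathrm{rhs}(D)$. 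For the inductive step I would replace, in this construction, the clause $D_{t-1}$ by the rank-one aggregation $\Phi$ that the induction hypothesis furnishes for $D_{t-1}$ (which has coefficient vector $\mathrm{coeff}(D_{t-1})$), keeping $E_{t-1}$ as the second parent; the coefficient vector is again $\mathrm{coeff}(D_t)$, and the right-hand side decreases from $\mathrm{rhs}(D_t)-\tfrac12$ by only $\tfrac12$ times the (less-than-one) rounding gap of $\Phi$, so it lies in $(\mathrm{rhs}(D_t)-1,\ \mathrm{rhs}(D_t)-\tfrac12]$ and still rounds up to $\mathrm{rhs}(D_t)$. Hence $D_t$ is a rank-one C--G cut.

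\emph{The direction ``C--G cut $\Rightarrow$ input proof''.} Suppose $\ell(J)$, with positive and negative literals indexed by $J^+$ and $J^-$, is a rank-one C--G cut witnessed by nonnegative multipliers on clauses $C_i\in\S_{\Cr}$ and on the box inequalities. I would specialize everything to the partial assignment $\rho$ on $x_J$ that falsifies $\ell(J)$, namely $x_j=0$ for $j\in J^+$ and $x_j=1$ for $j\in J^-$. The aggregated coefficient vector is supported on $J$, so restricting the whole aggregation to $x_J=\rho$ turns the aggregated inequality into $-|J^-|\ge\beta_0$, which is false because $\beta_0>-|J^-|$; read the other way, this is a Farkas certificate that the restricted clause system $\{C_i|_\rho\}$ together with $0\le x\le 1$ on the remaining variables is LP-infeasible (a $C_i$ already absorbed by $\ell(J)$ makes the conclusion immediate, and a $C_i$ satisfied by $\rho$ simply drops out). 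I would then invoke the known equivalence that a clause system is LP-infeasible over the $0$--$1$ box if and only if it admits a unit refutation, equivalently an input refutation of the empty clause. Finally, lifting this refutation --- reinstating in each participating clause the literals of $\ell(J)$ that $\rho$ had deleted --- turns each of its resolution steps into a legitimate resolution among clauses of $\S_{\Cr}$ and turns the empty clause into a clause absorbed by $\ell(J)$; the non-degeneracy convention on the cut forces this last clause to equal $\ell(J)$ itself, giving the desired input proof.

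The hard part is the equivalence invoked in the second direction: that LP-infeasibility of a clause system over $[0,1]$ forces not merely a resolution refutation but an \emph{input} (equivalently unit-resolution) refutation. This does not follow from elementary manipulation --- one must argue that the Farkas combination summing to a negative constant can always be processed combinatorially so as to expose a unit clause, which then propagates down to the empty clause --- and it is exactly here that rank-one-ness, as opposed to membership in the full C--G closure, is used. The remaining steps, namely verifying that the restricted multipliers genuinely form a Farkas certificate once the box multipliers on the coordinates of $J$ are accounted for, and that the lift produces $\ell(J)$ exactly rather than a proper sub-clause, are routine bookkeeping.
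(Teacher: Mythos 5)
The paper never proves this proposition in-text; it is cited to [Hoo89], so the only comparison available is against the argument of that reference, and your proposal follows essentially the same route. The forward direction is complete and correct: the half-and-half aggregation with box inequalities that turns one resolution step into a rank-one aggregation, and the induction that keeps the accumulated rounding gap at $1-2^{-t}<1$ by substituting the \emph{unrounded} aggregate for the non-input parent, is exactly the standard argument. For the converse, conditioning on the falsifying assignment, observing that the rank-one multipliers restrict to a Farkas certificate of LP-infeasibility of the conditioned clause set over the box, invoking the equivalence of LP-infeasibility with unit (equivalently input) refutability, and lifting the refutation is also the intended argument. The equivalence you flag as ``the hard part'' is a genuine, citable theorem (Blair, Jeroslow and Lowe: unit resolution refutes a clause set if and only if its LP relaxation over $[0,1]^n$ is infeasible), so leaning on it is legitimate rather than a gap---though note that rank-one-ness is consumed in the restriction step (it is what makes the conditioned system LP-infeasible rather than merely integer-infeasible), not inside that equivalence as you suggest.

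The one step I would not accept as written is the endgame of the converse. The lifted refutation terminates in a clause whose literals are among those of $\ell(J)$, i.e.\ a clause that absorbs $\ell(J)$, and your claim that the non-degeneracy convention forces it to equal $\ell(J)$ is false. Take $\S_{\Cr}=\{x_1+x_2\ge 1,\; -x_1+x_2\ge 0\}$ and the clause $x_2\vee x_3$: multipliers $\tfrac12,\tfrac12$ on the two clauses plus $1$ on $x_3\ge 0$ give $x_2+x_3\ge\tfrac12$, a genuine rank-one cut with non-integral aggregated right-hand side and the correct coefficient vector, yet the lifted refutation produces the proper subclause $x_2$, and $x_2\vee x_3$ is not the resolvent of any pair of derivable clauses. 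So under the paper's literal definition of an input proof (the target must itself appear as a resolvent) the proposition must be read with the usual convention ``has an input proof, or belongs to $\S_{\Cr}$, or is absorbed by a clause with an input proof''; your argument establishes exactly that reading and should say so, rather than assert that the derived clause equals $\ell(J)$.
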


\section{LP-consistency}

While resolution can always achieve consistency, it is not a practical method for the reduction of backtracking.  Resolution proofs tend to explode rapidly in length and complexity.  However, the LP relaxation of $\S$ provides an additional tool for this purpose.  Specifically, it provides a more useful test for consistency than whether a partial assignment violates a constraint.

Consistency of $\S$ implies that any partial assignment $x_J=v_J$ that is consistent with $\S_J$ (i.e., violates no constraint in $\S$) is consistent with $\S$.  We want a type of consistency that ensures that any partial assignment consistent with $\S_{\LP}$ is consistent with $\S$.  We can achieve this by defining consistency with respect to the LP relaxation $\S_{\LP}$ rather than the relaxation $\S_J$.  Recall that classical consistency is defined so that $D_J(\S_J)=D(S)|_J$.  We therefore define {\em LP-consistency} as follows.
\begin{definition}
A 0--1 constraint set $\S$ is {\em LP-consistent} if $D_J(\S_{\LP})=D(\S)|_J$ for all $J\subseteq N$.  
\end{definition}
Here, $D_J(\S_{\LP})$ refers to the set of {\em 0--1 assignments} to $x_J$ that are consistent with $\S_{\LP}$.  Thus $\S$ is {\em LP-consistent} if $\S_{\LP} \cup \{x_J = v_J\}$ is infeasible for any 0--1 partial assignment $x_J = v_J$ that is inconsistent with $\S$.

\begin{example} \label{ex:LPconsistency}
Consider the 0--1 constraint set $\S = \{2x_1 + 4x_2 \geq -1, \; 2x_1-4x_2\geq -3, \; x\in\{0,1\}^2\}$ (Fig.~\ref{fig:LPconsistency}).  The partial assignment $x_1=0$ is consistent with $\S_{\LP}$ but not with $\S$, because both $(x_1,x_2)=(0,0)$ and $(x_1,x_2)=(0,1)$ violate $\S$.  So $\S$ is not LP-consistent. 
\end{example}

\begin{figure}[!t]
\centering
\includegraphics[scale=0.6,clip,trim=240 140 290 150]{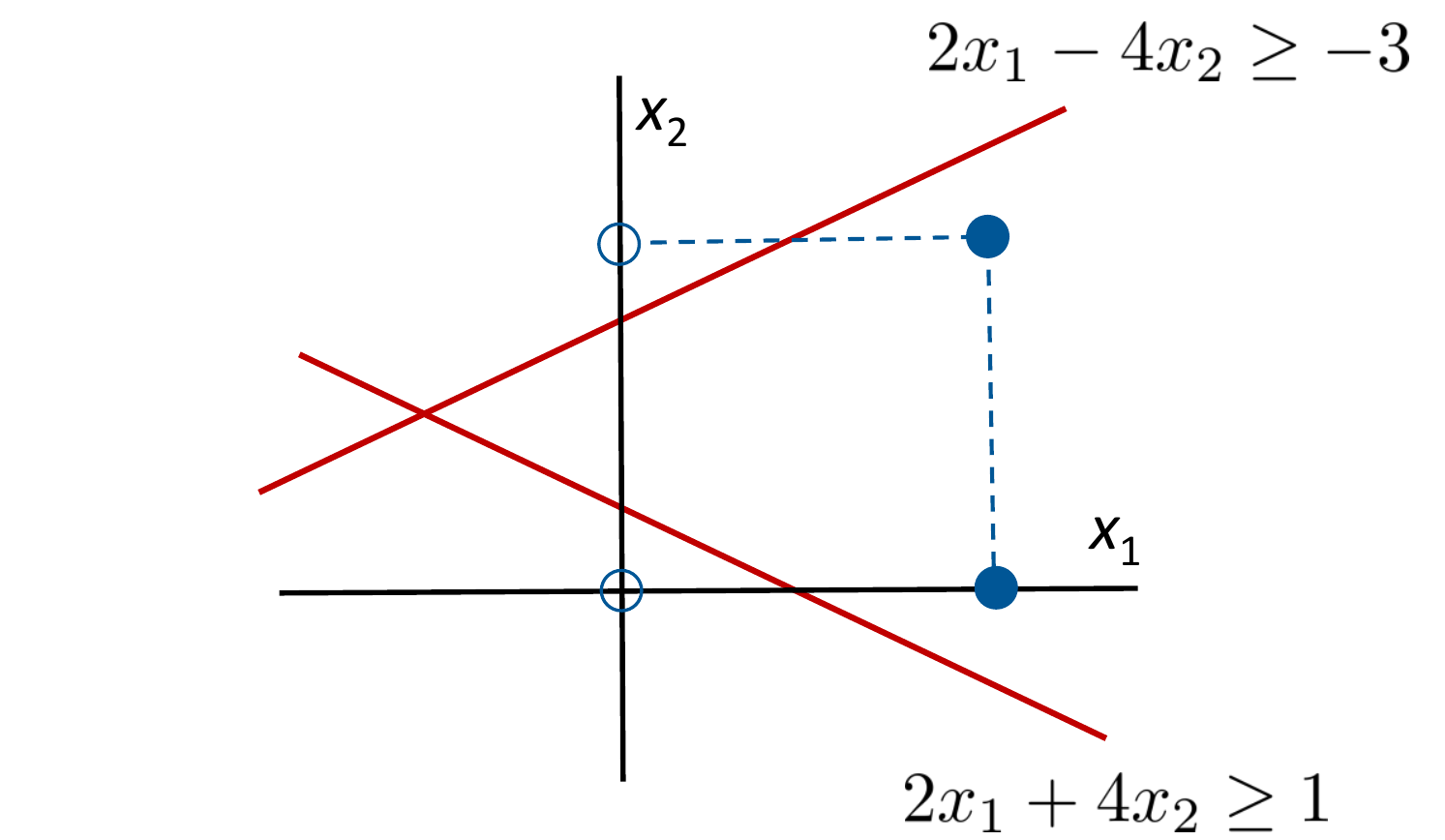}
\caption{Illustration of Example~\ref{ex:LPconsistency}.}\label{fig:LPconsistency}
\end{figure}

Two elementary properties of LP-consistency follow.
\begin{proposition}\label{prop:LPconsistent}
A consistent 0--1 constraint set is LP-consistent.
\end{proposition}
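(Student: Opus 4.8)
The plan is to prove the contrapositive-free direction directly: assume $\S$ is consistent and show $D_J(\S_{\LP}) = D(\S)|_J$ for every $J\subseteq N$. The inclusion $D(\S)|_J \subseteq D_J(\S_{\LP})$ is immediate and holds for any constraint set: if $x_J = v_J$ extends to a 0--1 point $x \in D(\S)$, then that same point satisfies $\S_{\LP}$ (since $\S_{\LP}$ is a relaxation of $\S$ and integral points of $\S_{\LP}$ are exactly the points of $D(\S)$), so $v_J \in D_J(\S_{\LP})$. The work is all in the reverse inclusion $D_J(\S_{\LP}) \subseteq D(\S)|_J$.

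For the reverse inclusion, I would take a 0--1 partial assignment $x_J = v_J$ with $v_J \in D_J(\S_{\LP})$ and show $v_J \in D(\S)|_J$, i.e.\ $x_J = v_J$ is consistent with $\S$. The key observation is that a 0--1 partial assignment that is consistent with $\S_{\LP}$ is in particular consistent with $\S_J$: any single constraint of $\S$ whose variables all lie in $x_J$ is satisfied by every point of $\S_{\LP}$ (it is one of the defining inequalities, with the box constraints relaxed), so once $x_J = v_J$ can be extended to a point of $\S_{\LP}$, that $v_J$ already violates no constraint of $\S$ lying entirely in $x_J$. Hence $v_J \in D_J(\S_J)$. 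Now invoke consistency of $\S$, which by the Proposition characterizing consistency (or directly from the definition) gives $D_J(\S_J) = D(\S)|_J$, so $v_J \in D(\S)|_J$, as desired.

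Combining the two inclusions gives $D_J(\S_{\LP}) = D(\S)|_J$ for all $J$, which is precisely LP-consistency.

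I do not expect a serious obstacle here; the statement is "elementary" as the paper says. The one point that needs a little care is the claim that consistency with $\S_{\LP}$ implies consistency with $\S_J$ for a 0--1 partial assignment: this uses that $\S_{\LP}$ consists of the same inequalities $Ax \ge b$ as $\S$ together with $x \in [0,1]^n$ rather than $x \in \{0,1\}^n$, so a constraint of $\S$ supported on $x_J$ (other than the in-domain constraints, which a 0--1 assignment satisfies automatically) is literally present in $\S_{\LP}$ and must hold at any feasible extension. Making sure the bookkeeping about the in-domain constraints $x_j \in \{0,1\}$ versus $x_j \in [0,1]$ is handled cleanly is the only fussy part.
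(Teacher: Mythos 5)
Your proposal is correct and follows essentially the same route as the paper's proof: the core step in both is that a 0--1 partial assignment consistent with $\S_{\LP}$ violates no constraint of $\S$ supported on $x_J$, after which consistency of $\S$ yields consistency with $\S$ itself. The only cosmetic difference is that you also spell out the trivial inclusion $D(\S)|_J \subseteq D_J(\S_{\LP})$, which the paper leaves implicit.
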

\begin{proof}
	Consider any 0--1 partial assignment $x_J=v_J$ that is consistent with $S_{\LP}$.  We claim that $x_J=v_J$ is consistent with $\S$, which suffices to show that $\S$ is LP-consistent.  Since $S_{\LP}\cup\{x_J=v_J\}$ is feasible, $x_J=v_J$ violates no constraints in $\S$.  Now since $\S$ is consistent, this means that $x_J=v_J$ is consistent with $\S$, as claimed. \qed
\end{proof}
In addition, a 0--1 constraint set that describes the integer hull (the convex hull of feasible 0--1 points) is LP-consistent.
\begin{proposition}\label{prop:convexHull}
Given 0--1 constraint set $\S$, if $\S_{\LP}$ describes the integer hull of $D(\S)$, then $\S$ is LP-consistent.
\end{proposition}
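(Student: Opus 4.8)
The goal is to show $D_J(\S_{\LP}) = D(\S)|_J$ for every $J\subseteq N$, where $D_J(\S_{\LP})$ is the set of \emph{0--1} assignments to $x_J$ consistent with $\S_{\LP}$. The plan is to prove the two inclusions separately. One direction, $D(\S)|_J \subseteq D_J(\S_{\LP})$, holds for any 0--1 constraint set and needs no hypothesis: if $v_J\in D(\S)|_J$, then $v_J$ extends to some 0--1 point $x\in D(\S)$, and since every point of $D(\S)$ satisfies the linear relaxation, this $x$ witnesses feasibility of $\S_{\LP}\cup\{x_J=v_J\}$, so $v_J\in D_J(\S_{\LP})$.

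The integer-hull assumption enters in the reverse inclusion $D_J(\S_{\LP})\subseteq D(\S)|_J$. Take a 0--1 partial assignment $v_J$ that is consistent with $\S_{\LP}$, so there is a point $\bar x$ satisfying $\S_{\LP}$ with $\bar x_J = v_J$. Because $\S_{\LP}$ describes the integer hull, $\bar x\in\conv(D(\S))$, and I would write $\bar x = \sum_i \lambda_i x^i$ with $x^i\in D(\S)$, $\lambda_i>0$, and $\sum_i \lambda_i = 1$. The key step is the coordinate argument: for each $j\in J$ the coordinate $\bar x_j = v_j$ equals $0$ or $1$, and since every $x^i$ is a 0--1 point with $0\le x^i_j\le 1$, the equation $\sum_i \lambda_i x^i_j = v_j$ with $\lambda_i>0$ forces $x^i_j = v_j$ for all $i$ (all the mass must sit at $0$ when $v_j=0$, and at $1$ when $v_j=1$). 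Hence $x^i_J = v_J$ for every $i$ appearing in the combination, and picking any such $x^i\in D(\S)$ gives $v_J\in D(\S)|_J$.

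I do not expect a serious obstacle: the argument is essentially the observation that the extreme 0--1 points in a convex representation of $\bar x$ must agree with $\bar x$ on every coordinate where $\bar x$ is integral. The only point requiring a little care is making that coordinate-by-coordinate deduction cleanly, using both the box bounds $x^i_j\in[0,1]$ and the integrality of the $x^i$; everything else is bookkeeping with the two inclusions.
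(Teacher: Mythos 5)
Your proof is correct and follows essentially the same route as the paper's: both come down to the fact that if a point of $\conv(D(\S))$ lies on the face $x_J=v_J$ of the unit hypercube, then some 0--1 point of $D(\S)$ must also lie on that face. The paper asserts this in one line (a hypercube face disjoint from $D(\S)$ is disjoint from its convex hull) and argues the contrapositive, whereas you make the step explicit with the convex-combination coordinate argument --- more detail, but not a different idea.
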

\begin{proof}
    Suppose that $\S\cup\{x_J=v_J\}$ is infeasible for a given 0--1 partial assignment $x_J=v_J$.  Then $x_J=v_J$ describes a face of the unit hypercube that is disjoint from $D(\S)$.  This implies that the face is disjoint from the convex hull of $D(\S)$, which is described by $\S_{\LP}$.  Thus $\S_{\LP}\cup\{x_J=v_J\}$ is infeasible, and it follows that $\S$ is LP-consistent. \qed
\end{proof}

It is essential to observe that a convex hull model is not necessary to achieve LP-consistency, a fact that will be exploited in later sections.  This can be seen in an example.

\begin{example} \label{ex:convexHull}
Consider the following two constraint sets (Fig.~\ref{fig:convexHull}), which have the same feasible set:
\[
\begin{array}{l}
\S^1 = \{x_1+x_2\leq 1, \; x_2+x_3\leq 1, \; x\in \{0,1\}^3\} \vspace{0.5ex} \\
\S^2 = \{x_1+2x_2+x_3\leq 2, \; x\in\{0,1\}^3\}
\end{array}
\]
The LP relaxation $\S^1_{\LP}$ describes the integer hull of $D(\S^1)=D(\S^2)$, and so $\S^1$ is LP-consistent by Proposition~\ref{prop:convexHull}.  Yet the constraint set $\S^2$ is also LP-consistent, even though $\S^2_{\LP}$ does not describe the convex hull, but describes a polytope with fractional extreme points $(x_1,x_2,x_3)=(0,\frac{1}{2},1),(1,\frac{1}{2},0)$.  Interestingly, the inequality $x_1+2x_2+x_3\geq 2$ in $\S^2$ is the sum of the two nontrivial facet-defining inequalities in $\S^1$ and is therefore weaker than either of them from a polyhedral point of view.  Yet it cuts off more infeasible 0--1 points than either of the facet-defining inequalities and is therefore stronger in this sense.  Indeed, the purpose of achieving LP-consistency is to cut off infeasible 0--1 (partial) assignments, not to cut off fractional vertices of the LP relaxation. 
\end{example}

\begin{figure}
\centering
\includegraphics[scale=0.6,clip,trim=240 80 290 150]{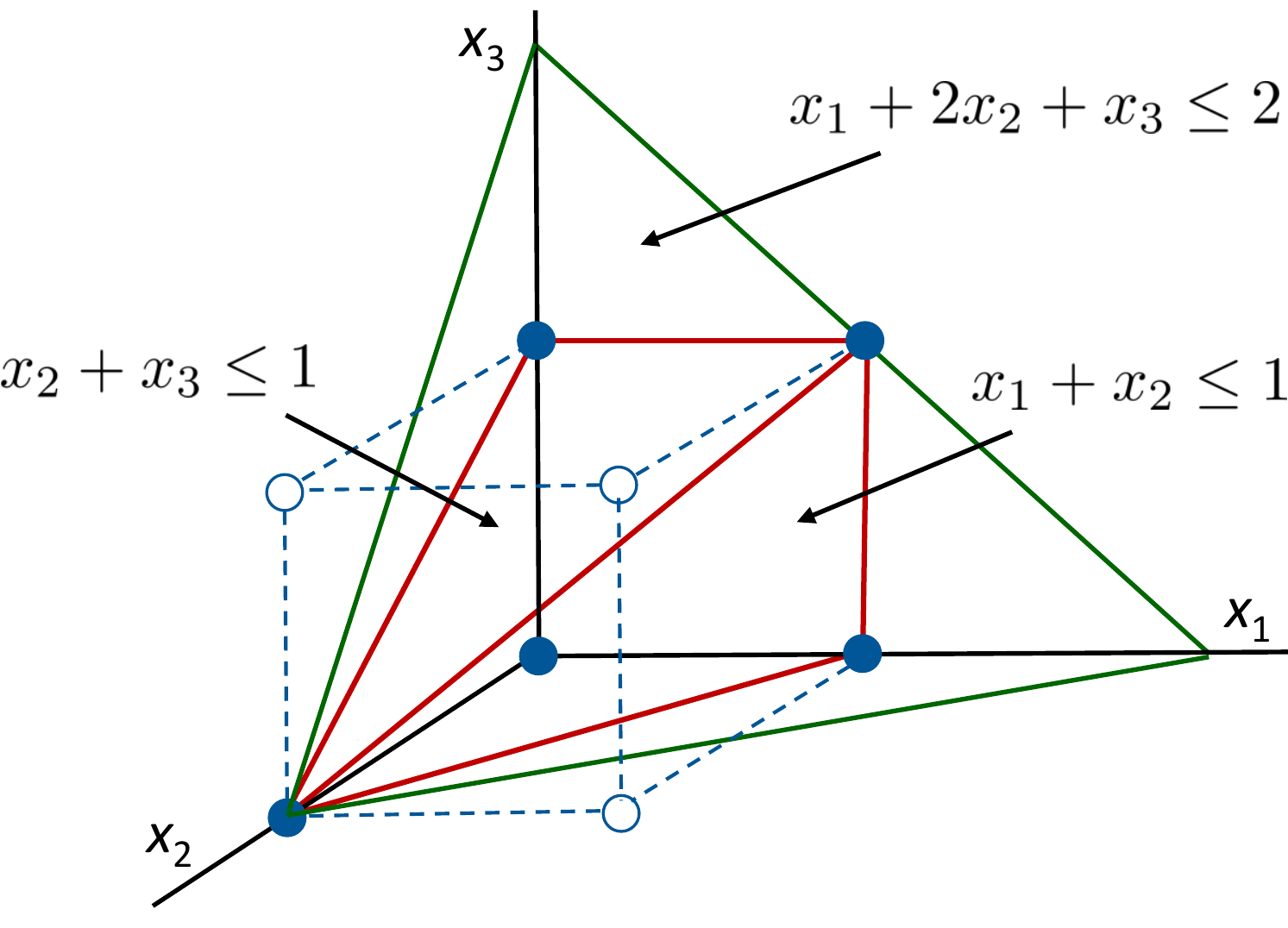}
\caption{Illustration of Example~\ref{ex:convexHull}}\label{fig:convexHull}
\end{figure}

\section{Characterizing LP-Consistency} \label{sec:characterizing}



The following result gives a necessary condition for consistency based on clausal inequalities.
\begin{proposition}
If a constraint set $\S$ is consistent, then all of its implied clausal inequalities are in $\S_{\Cr}$.  
\end{proposition}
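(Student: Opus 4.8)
The plan is to prove the contrapositive-style statement: assuming $\S$ is consistent, show that every clausal inequality logically implied by $\S$ already appears in $\S_{\Cr}$. Recall that $\S_{\Cr}$ is the set of clausal inequalities implied by \emph{individual} constraints of $\S$; so the content of the proposition is that consistency forces the implied clauses to be "local" in this sense.

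First I would take an arbitrary clausal inequality $C$, representing a clause $\ell(J) = \bigvee_{j\in J^+} x_j \vee \bigvee_{j \in J^-} \neg x_j$, that is logically implied by $\S$ but suppose for contradiction that $C \notin \S_{\Cr}$. The key observation is that a 0--1 point violates the clausal inequality represented by $\ell(J)$ precisely when it agrees with the unique partial assignment $x_J = v_J$ that falsifies every literal in the clause (namely $v_j = 0$ for $j \in J^+$ and $v_j = 1$ for $j \in J^-$). Since $\S$ logically implies $C$, no feasible 0--1 point of $\S$ violates $C$; equivalently, this partial assignment $x_J = v_J$ is inconsistent with $\S$. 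Now I would invoke consistency of $\S$ via its contrapositive characterization (stated right after the definition of consistent constraint set): an inconsistent partial assignment must violate some individual constraint $C' \in \S$. The partial assignment $x_J = v_J$ only assigns variables in $x_J$, so for $C'$ to be violated by it, $C'$ must involve only variables in $x_J$ — more precisely, the falsity of $C'$ must be forced by $x_J = v_J$ alone.

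From here the plan is to show that this forcing constraint $C'$ implies a clausal inequality that (a) lies in $\S_{\Cr}$ and (b) absorbs $C$, hence equals or strengthens $C$, giving the contradiction (or, depending on exactly how one wants to phrase "all implied clausal inequalities are in $\S_{\Cr}$", one argues $C$ itself is in $\S_{\Cr}$). Concretely: since $x_J = v_J$ violates $C'$, the constraint $C'$ has no feasible 0--1 extension agreeing with $x_J = v_J$ on the variables it shares; I would extract from $C'$ the clausal inequality $\ell(J')$ that is falsified by exactly this partial pattern, where $J' \subseteq J$ records the literals of $C'$ set false. This $\ell(J')$ is implied by the single constraint $C'$, so $\ell(J') \in \S_{\Cr}$, and since $J' \subseteq J$, the clause $\ell(J')$ absorbs $\ell(J)$, i.e. $\ell(J')$ logically implies $C$. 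Thus $C$ is implied by a member of $\S_{\Cr}$; combined with the fact (noted in the excerpt) that a 0--1 inequality is logically equivalent to the set of clausal inequalities it implies, one concludes $C \in \S_{\Cr}$ (since $\S_{\Cr}$ is closed under implied clauses of its members via absorption), which is the desired conclusion.

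The step I expect to be the main obstacle is the passage from "$x_J = v_J$ violates the individual constraint $C'$" to "$C'$ implies a clausal inequality falsified by precisely a sub-pattern of $x_J = v_J$." For a general (not necessarily clausal) 0--1 inequality $C'$ this requires care: one must argue that because $C'$ is a linear inequality over $\{0,1\}$ variables and $x_J = v_J$ kills it, the set of 0--1 points agreeing with $x_J = v_J$ that satisfy $C'$ is empty, and then produce a clause from $C'$ absorbed appropriately. The cleanest route is probably to restrict attention at the outset to $\S_{\Cr}$ in place of $\S$, using the fact from the excerpt that $\S$ is logically equivalent to $\S_{\Cr}$ and that the consistency-relevant data (which partial assignments are consistent) depends only on the feasible set; then every individual constraint in play is already clausal and the absorption argument is immediate. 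I would also double-check the boundary case $J = \emptyset$ (the empty clause, i.e. $\S$ infeasible) separately, where the claim holds vacuously or trivially.
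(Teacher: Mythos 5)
Your proposal is correct and follows essentially the same route as the paper's proof: take the unique partial assignment $x_J=v_J$ falsifying the clause, note it is inconsistent with $\S$, invoke consistency to obtain an individual constraint of $\S$ violated by that partial assignment, and conclude the clause is implied by that single constraint and hence lies in $\S_{\Cr}$. The only difference is that your detour through an intermediate absorbed clause $\ell(J')$ (and the worry you flag as the ``main obstacle'') is unnecessary: since a partial assignment can violate a constraint only when it fixes all of that constraint's variables, every $0$--$1$ point violating $C$ extends $x_J=v_J$ and therefore violates the individual constraint directly, so that constraint implies $C$ outright.
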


\begin{proof}
Suppose that $\S$ is consistent, and let $C$ be any clausal inequality implied by $\S$.  Then the assignment $x_J=v_J$ violates $C$, where $x_J$ are the variables in $C$ and $v_j$ is 1 when $x_j$ is negated in $C$ and 0 otherwise. This means $x_J=v_J$ is inconsistent with $\S$, which implies by the consistency of $\S$ that $x_J=v_J$ violates an inequality $\alpha x \geq \beta$ in $\S$. As a result, $C$ must be implied by $\alpha x \geq \beta$, showing that $C \in \S_{\Cr}$.  \qed
\end{proof}

LP-consistency allows us to derive a stronger argument on the relation between an LP-consistent set and its implied clausal inequalities, as it provides both necessary and sufficient conditions. In particular, a 0--1 constraint set $\S$ is LP-consistent if and only if all of its implied clauses are C-G cuts for $\S_{\LP}$.  This is due to the following fact.

\begin{proposition} \label{prop:CC}
	Given a 0--1 constraint set $\S$, a 0--1 partial assignment is consistent with $\S_{\LP}$ if and only if the assignment violates no clausal C-G cut for $\S_{\LP}$.
\end{proposition}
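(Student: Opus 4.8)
The plan is to prove both directions of the biconditional in Proposition~\ref{prop:CC}, treating a $0$--$1$ partial assignment $x_J = v_J$ as a linear system and relating its compatibility with $\S_{\LP}$ to the existence of certain certificates of infeasibility.

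\medskip

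\noindent\textbf{The easy direction.} First I would show that if $x_J = v_J$ is consistent with $\S_{\LP}$, then it violates no clausal C-G cut for $\S_{\LP}$. This is essentially immediate: a C-G cut for $\S_{\LP}$ is valid for every integer (hence every $0$--$1$) point in $D(\S)$ and more generally for every integer point of the polytope described by $\S_{\LP}$; since $x_J = v_J$ being consistent with $\S_{\LP}$ means $\S_{\LP} \cup \{x_J = v_J\}$ has a (possibly fractional) feasible point, and... wait --- C-G cuts need not be satisfied by fractional points. So more care is needed here. The right statement is: a $0$--$1$ assignment $x_J = v_J$ extends to a $0$--$1$ point of $D(\S)$ iff it does so; but consistency with $\S_{\LP}$ is weaker. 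The cleaner route is the contrapositive of this direction, which I fold into the argument below.

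\medskip

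\noindent\textbf{The substantive direction.} The core claim is: if $x_J = v_J$ violates no clausal C-G cut for $\S_{\LP}$, then $\S_{\LP} \cup \{x_J = v_J\}$ is feasible. I would argue the contrapositive. Suppose $\S_{\LP} \cup \{x_J = v_J\}$ is infeasible. The assignment $x_J = v_J$ fixes each $x_j$ ($j \in J$) to a bound of the box $[0,1]^n$; substituting these values, infeasibility of the reduced LP system (in the remaining variables $x_{N\setminus J}$, with $\S_{\LP}$'s inequalities $Ax \ge b$ and the box constraints) means, by Farkas' lemma, there is a nonnegative combination of the rows of $\S_{\LP}$ together with the fixings $x_j \le v_j$, $x_j \ge v_j$ for $j \in J$ and the box inequalities $x_i \ge 0$, $x_i \le 1$ for $i \notin J$ that yields $0 \ge \delta$ for some $\delta > 0$. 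The plan is to track how the fixings enter this Farkas certificate and to rescale/round so that what emerges is a \emph{clausal} C-G cut of $\S_{\LP}$ in the variables $x_J$ that $x_J = v_J$ violates. Concretely: a C-G cut of $\S_{\LP}$ is obtained from a nonnegative multiplier vector $u$ applied to $Ax \ge b$ and box constraints, giving $cx \ge uA\cdot$-combination; rounding up the constant to $\lceil \cdot \rceil$ yields the cut. One shows that the Farkas certificate of infeasibility of $\S_{\LP}\cup\{x_J = v_J\}$ can be split into a part that does not involve the fixings --- this gives a genuine C-G-type inequality valid for $\S_{\LP}$ --- and the contribution of the fixings forces that inequality, after the rounding step, to be clausal on $x_J$ and to be violated precisely by $v_J$. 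The key technical lemma to establish is that this rounding produces a $0$--$1$ \emph{clausal} inequality (coefficients in $\{-1,0,1\}$ on $x_J$, suitably normalized), which is exactly the content that makes ``violates no clausal C-G cut'' the right invariant; this is where I would invoke, or re-derive in the spirit of, the input-resolution/C-G correspondence already cited (the proposition attributed to \cite{Hoo89}).

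\medskip

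\noindent\textbf{Expected main obstacle.} The hard part will be the rounding/normalization step: showing that when $\S_{\LP} \cup \{x_J = v_J\}$ is infeasible, one can always extract a \emph{clausal} C-G cut (not merely some C-G cut) that is violated by $x_J = v_J$. Arbitrary Farkas multipliers will produce a general valid inequality; the work is in exploiting that $v_J \in \{0,1\}^J$ and that a half-space cutting off a face $\{x_J = v_J\}$ of the cube, when tightened by the C-G rounding operator, collapses to a clause. I would handle this by first reducing to the minimal sub-assignment of $x_J = v_J$ that is already inconsistent with $\S_{\LP}$, then showing the associated C-G cut has the form $\sum_{j \in J^+} x_j + \sum_{j \in J^-}(1-x_j) \ge 1$ with $J^+ = \{j : v_j = 0\}$, $J^- = \{j : v_j = 1\}$ --- i.e., the canonical clause excluding $v_J$ --- and that its C-G-validity for $\S_{\LP}$ follows from the infeasibility certificate. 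The two directions then close the biconditional.
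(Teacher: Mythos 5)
Your ``substantive'' direction is essentially the paper's own argument: infeasibility of $\S_{\LP}\cup\{x_J=v_J\}$ yields, via LP duality/Farkas, a surrogate $uAx\geq ub$ of the generalized-clause form $\sum_{j\in J^+}x_j+\sum_{j\in J^-}(1-x_j)\geq\pi$ with $\pi>0$, which is then reduced and rounded into a clausal C-G cut violated by $v_J$. Your device for forcing the right-hand side to round to exactly $1$ --- pass to a minimal sub-assignment that is already inconsistent with $\S_{\LP}$, so that the LP minimum of the clause form lies in $(0,1]$ --- is a workable variant of the paper's mechanism, which instead keeps the full assignment and subtracts box constraints $x_j\leq 1$, $x_j\geq 0$ over an arbitrary subset $\hat J$ of size $\lceil\pi\rceil-1$ to drop the right-hand side into $(0,1]$ before rounding. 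Either closes that direction.

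The genuine gap is the other direction. You correctly observe that ``consistent with $\S_{\LP}$ implies no violated clausal C-G cut'' is \emph{not} immediate, because a C-G cut may cut off fractional points of $\S_{\LP}$; but you then announce that its contrapositive is ``folded into the argument below,'' and the argument below proves only the converse implication (infeasible $\Rightarrow$ some violated clausal cut), so this direction is never actually established. What is needed, and what the paper supplies, is this: if the 0--1 assignment $x_J=v_J$ violates a clausal C-G cut $a_Jx_J\geq\beta$ (whose variables all lie in $x_J$, since a constraint is violated only when all its variables are fixed), then by integrality of the clause coefficients and of $v_J$ the violation has slack at least one, i.e.\ $a_Jv_J\leq\beta-1$; and since the cut arises from multipliers $u\geq 0$ with $uA=a$ and $\beta-1<ub\leq\beta$, every point of the face $x_J=v_J$ already violates the \emph{unrounded} surrogate $uAx\geq ub$, which is valid for all of $\S_{\LP}$. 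Hence the face is disjoint from the polytope and $x_J=v_J$ is inconsistent with $\S_{\LP}$. That integrality slack of one is exactly what bridges the rounding gap you flagged; without this step one half of the biconditional remains unproven.
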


\begin{proof}
	It suffices to show that a given 0--1 partial assignment $x_J=v_J$ violates a clausal C-G for $\S_{\LP}$ if and only if $\S_{\LP}\cup\{x_J=v_J\}$ is infeasible.  Suppose first that $x_J=v_J$ violates a clausal inequality $ax\geq\beta$ that is a C-G cut for $\S_{\LP}$, where $S_{\LP}$ is the system $Ax\geq b$.  Since $x_J=v_J$ violates $ax\geq\beta$, we can write the inequality as $a_Jx_J\geq\beta$, where \mbox{$a_Jv_J\leq \beta-1$}.  Now since $ax\geq\beta$ is a C--G cut, there is a tuple $u\geq 0$ of multipliers such that $uA=a$ and $\beta-1< ub \leq \beta$.  We therefore have $(uA)_Jv_J=a_Jv_J \leq \beta-1<ub$.  This implies that $x_J=v_J$ violates $uAx\geq ub$, and so $\S_{\LP}\cup\{x_J=v_J\}$ must be infeasible.  
	
	For the converse, suppose that $\S_{\LP}\cup\{x_J=v_J\}$ is infeasible, which means that the face of the unit hypercube defined by $x_J=v_J$ lies outside the polytope defined by $S_{\LP}$.  Let $J^+=\{j\in J\;|\; v_j=0\}$ and $J^-=\{j\in J\;|\; v_j=1\}$.  
	Then some inequality of the form $\sum_{j\in J^+} x_j + \sum_{j\in J^-} (1-x_j) \geq \bar{\pi}$ for some $\bar{\pi} > 0$ separates the face just mentioned from the polytope; i.e., $x_J=v_J$ violates this inequality. Since this inequality is valid for $\S_{\LP}$, it is dominated by some surrogate of $Ax\geq b$. That is there exists a tuple $u \geq 0$ of multipliers such that $uA \geq ub$ is of the form
	\begin{equation}
	\sum_{j\in J^+} x_j + \sum_{j\in J^-} (1-x_j) \geq \pi
	\label{eq:separator}
	\end{equation}
	where $\pi \geq \bar{\pi}$, and $\pi\leq |J|$ because $\S$ is feasible. Now pick any subset $\hat{J}\subseteq J$ with $|\hat{J}|=\lceil \pi \rceil - 1$, let $\hat{J}^+=J^+\cap\hat{J}$, and let $\hat{J}^-=J^-\cap\hat{J}$.  Take the sum of (\ref{eq:separator}) with $-x_j\geq -1$ for $j\in \hat{J}^+$ and  $x_j\geq 0$ for $j\in \hat{J}^-$.  This yields a clausal inequality that is a surrogate of $Ax\geq b$:
	\[
	\sum_{j\in J^+\setminus\hat{J}^+} \hspace{-2ex} x_j + \hspace{-1ex} \sum_{j\in J^-\setminus\hat{J}^-} \hspace{-2ex} (1-x_j) \geq 1 + \pi - \lceil \pi \rceil
	\]
	Rounding up the right-hand side (if necessary) yields a clausal C--G cut violated by $x_J=v_J$.  Thus $x_J=v_J$ violates a clausal C-G cut for $\S_{\LP}$, as claimed.
	\qed
\end{proof}

\begin{example}
Consider again the constraint set $\S$ of Example~\ref{ex:3constraints}.  The partial assignment $(x_1,x_3)=(0,0)$ is inconsistent with $\S_{\LP}$ and violates a clausal C-G cut, namely $x_1+x_3\geq 1$.  The cut is obtained by assigning multipliers $\frac{1}{4}, \frac{1}{2}, \frac{1}{4}, \frac{1}{4}, \frac{1}{2}$ to the three constraints of $\S$, $x_2\geq 0$, and $x_3\geq 0$, respectively.  The partial assignment $(x_1,x_3)=(0,1)$ is consistent with $\S_{\LP}$ and therefore violates no clausal C-G cut.
\end{example}


\begin{corollary} \label{cor:LPconsistency} A constraint set $\S$ is LP-consistent if and only if all of its implied clausal inequalities are C-G cuts for $\S_{\LP}$. 
\end{corollary}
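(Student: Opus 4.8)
The plan is to reduce the statement to Proposition~\ref{prop:CC} via the standard dictionary between clausal inequalities and 0--1 partial assignments. The first step is to put LP-consistency in a one-directional form: since $\S_{\LP}$ relaxes $\S$ we always have $D(\S)|_J = D_J(\S) \subseteq D_J(\S_{\LP})$, so $\S$ is LP-consistent precisely when, for every $J\subseteq N$ and every 0--1 assignment $x_J=v_J$ that is inconsistent with $\S$, the assignment $x_J=v_J$ is also inconsistent with $\S_{\LP}$.

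Second, I would set up the correspondence. To $x_J=v_J$ associate the clause $\bar C_{J,v_J}$ whose positive literals are the $x_j$ with $v_j=0$ and whose negative literals are the $\neg x_j$ with $v_j=1$; its representing inequality is violated by exactly the 0--1 points with $x_J=v_J$, and conversely every clausal inequality is of this form, with $J$ its variable set and $v_J$ its unique falsifying assignment. Consequently $\bar C_{J,v_J}$ is implied by $\S$ if and only if no feasible 0--1 point of $\S$ has $x_J=v_J$, i.e.\ if and only if $x_J=v_J$ is inconsistent with $\S$. Combining this with Proposition~\ref{prop:CC} (which says $x_J=v_J$ is inconsistent with $\S_{\LP}$ if and only if it violates some clausal C-G cut for $\S_{\LP}$) turns the corollary into a statement about which clauses are C-G cuts.

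The ``if'' direction is then immediate: if every implied clausal inequality is a C-G cut for $\S_{\LP}$ and $x_J=v_J$ is inconsistent with $\S$, then $\bar C_{J,v_J}$ is implied by $\S$, hence a clausal C-G cut for $\S_{\LP}$, and since $x_J=v_J$ violates it, Proposition~\ref{prop:CC} makes $x_J=v_J$ inconsistent with $\S_{\LP}$; by the reformulation, $\S$ is LP-consistent. For the ``only if'' direction, let $C=\bar C_{J,v_J}$ be a clausal inequality implied by $\S$; LP-consistency makes $x_J=v_J$ inconsistent with $\S_{\LP}$, so by Proposition~\ref{prop:CC} it violates some clausal C-G cut $C'$ for $\S_{\LP}$. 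A clause can be violated by $x_J=v_J$ only when all its variables lie in $J$ and all its literals are falsified, so $C'$ absorbs $C$; that is, $C$ is obtained from $C'$ by adjoining literals.

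The one point that is not pure bookkeeping, and the place I expect the real work, is showing that adjoining literals to a clausal C-G cut again yields a clausal C-G cut for $\S_{\LP}$: Proposition~\ref{prop:CC} only hands us \emph{some} C-G cut violated by $x_J=v_J$, typically one on a proper subset of $J$, so we must get back up to the particular clause $C$ we started with. For this I would take the nonnegative surrogate of $\S_{\LP}$ (whose rows include the bounds $x_j\ge 0$ and $-x_j\ge -1$) that rounds up to $C'$, and extend it by putting weight $1$ on the row $x_j\ge 0$ for each added positive literal and on the row $-x_j\ge -1$ for each added negative literal. The new surrogate has left-hand side equal to the coefficient vector of $C$; its right-hand side drops by exactly one per added negative literal, so its round-up is exactly the right-hand side of $C$, and the defining inequality $\beta-1<ub\le\beta$ of a C-G cut is preserved. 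Hence $C$ is a clausal C-G cut for $\S_{\LP}$, which closes the argument; everything else is a transcription of Proposition~\ref{prop:CC}.
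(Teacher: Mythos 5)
Your proposal is correct and follows essentially the same route as the paper's proof: translate between clausal inequalities and their unique falsifying 0--1 partial assignments, and apply Proposition~\ref{prop:CC} in both directions, with the ``only if'' direction resting on the fact that a clause absorbed by a clausal C-G cut is itself a C-G cut. The paper merely asserts that last absorption step, whereas you justify it by extending the surrogate with unit weights on the bound rows $x_j\geq 0$ and $-x_j\geq -1$ and checking that the round-up lands exactly on the right-hand side of $C$ --- a correct and worthwhile addition, since that is the only nontrivial point in the argument.
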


\begin{proof}
Suppose first that $\S$ is LP-consistent, and let $C$ be any clausal inequality implied by $\S$.  Then the assignment $x_J=v_J$ violates $C$, where $x_J$ are the variables in $C$ and $v_j$ is 1 when $x_j$ is negated in $C$ and 0 otherwise.  This means $x_J=v_J$ is inconsistent with $\S$, which implies by the LP-consistency of $\S$ that $x_J=v_J$ is inconsistent with $\S_{\LP}$. By Proposition~\ref{prop:CC}, $x_J=v_J$ violates some clausal C-G cut $C'$ of $\S_{\LP}$.  Then $C'$ must absorb $C$, which means $C$ is likewise a C-G cut of $\S_{\LP}$.

Conversely, suppose all clausal inequalities implied by $\S$ are C-G cuts for $\S_{\LP}$, and consider any partial assignment $x_J=v_J$ that is consistent with $\S_{\LP}$.  By Proposition~\ref{prop:CC}, $x_J=v_J$ violates no clausal C-G cut of $\S_{\LP}$.  This means that it violates no clause implied by $\S$, which implies that $x_J=v_J$ is consistent with $\S$, as desired. \qed
\end{proof}

\begin{example}
The constraint set $\S$ of Example~\ref{ex:0} is LP-consistent because its implied clausal inequalities are all absorbed by the inequalities in $\S\cup\{x_1+x_2\geq 1, \; x_1+x_3\geq 1\}$, and these are all C-G cuts for $\S_{\LP}$.
\end{example}

\section{LP-consistency and Resolution}

We have seen that full resolution achieves consistency for a 0--1 constraint set $\S$.  That is, $\S$ is consistent if it contains all clausal inequalities that have resolution proofs from $\S_{\Cr}$.  Full resolution therefore achieves LP-consistency, since any consistent constraint set is LP-consistent (Proposition~\ref{prop:LPconsistent}).  However, it is generally unnecessary to apply full resolution to achieve LP-consistency, and it is unclear what kind of resolution is necessary and sufficient for this purpose.     

Yet one can go some distance toward achieving LP-consistency by generating all clausal inequalities that have input proofs from $\S_{\Cr}$.  



\begin{example}
Let $\S$ consist of the following constraints, in addition to $x\in \{0,1\}$:
\[
\begin{array}{l@{\hspace{10ex}}l}
x_1 + x_2 + x_3 + x_4 \geq 1 & x_1 - x_2 + x_3 + x_4 \geq 0 \\
x_1 + x_2 + x_3 - x_4 \geq 0 & x_1 - x_2 + x_3 - x_4 \geq -1 \\
x_1 + x_2 - x_3 + x_4 \geq 0 & x_1 - x_2 - x_3 + x_4 \geq -1 \\
x_1 + x_2 - x_3 - x_4 \geq -1 & x_1 - x_2 - x_3 - x_4 \geq -2 
\end{array}
\]
In this case, $\S_{\Cr}=\S$.  Input proofs from $\S_{\Cr}$ yield the inequalities
\[
\begin{array}{l@{\hspace{6ex}}l@{\hspace{6ex}}l}
x_1 + x_2 + x_3 \geq 1 & x_1 + x_2 + x_4 \geq 1 & x_1 + x_3 + x_4 \geq 1 \\
x_1 + x_2 - x_3 \geq 0 & x_1 + x_2 - x_4 \geq 0 & x_1 + x_3 - x_4 \geq 0 \\
x_1 - x_2 + x_3 \geq 0 & x_1 - x_2 + x_4 \geq 0 & x_1 - x_3 + x_4 \geq 0 \\ 
x_1 - x_2 - x_3 \geq -1 & x_1 - x_2 - x_4 \geq -1 & x_1 - x_3 - x_4 \geq -1 
\end{array}
\]
Adding these inequalities to $\S_{\Cr}$ to obtain $\S'$ does not achieve LP-consistency.  The partial assignment $x_1=0$ is inconsistent with $\S$ but consistent with $\S'_{\LP}$, where the latter is due to the fact that $(x_1,\ldots,x_4)=(0,\frac{1}{2},\frac{1}{2},\frac{1}{2})$ satisfies $\S'_{\LP}$.  Yet the input proofs are not useless because they eliminate some partial assignments that are inconsistent with $\S$, such as $(x_1,x_2)=(0,0)$, which is consistent with $\S_{\Cr}$ but not with $\S'_{\LP}$.  
\end{example}

On the other hand, it may be possible to achieve LP-consistency without adding all clauses that have input proofs from $\S_{\Cr}$, and even without adding all the clauses in $\S_{\Cr}$.

\begin{example} \label{ex:counterexample}
Let $\S$ be the constraint set of Example~\ref{ex:LPconsistency}, where $\S_{\Cr}=\{x_1+x_2\geq 1, \; x_1-x_2\geq 0\}$.  The clausal inequality $x_1\geq 1$ has an input proof from $\S_{\Cr}$, but $\S_{\Cr}$ is LP-consistent without adding this inequality.  In fact, we can achieve LP-consistency for $\S$ even without adding to it all the clausal inequalities in $\S_{\Cr}$.  For example, if we add only $x_1+x_2\geq 1$ to $\S$, the resulting constraint set is LP-consistent (Fig.~\ref{fig:LPconsistency2}).
\end{example}

\begin{figure}[!b]
\centering
\includegraphics[scale=0.6,clip,trim=240 140 290 150]{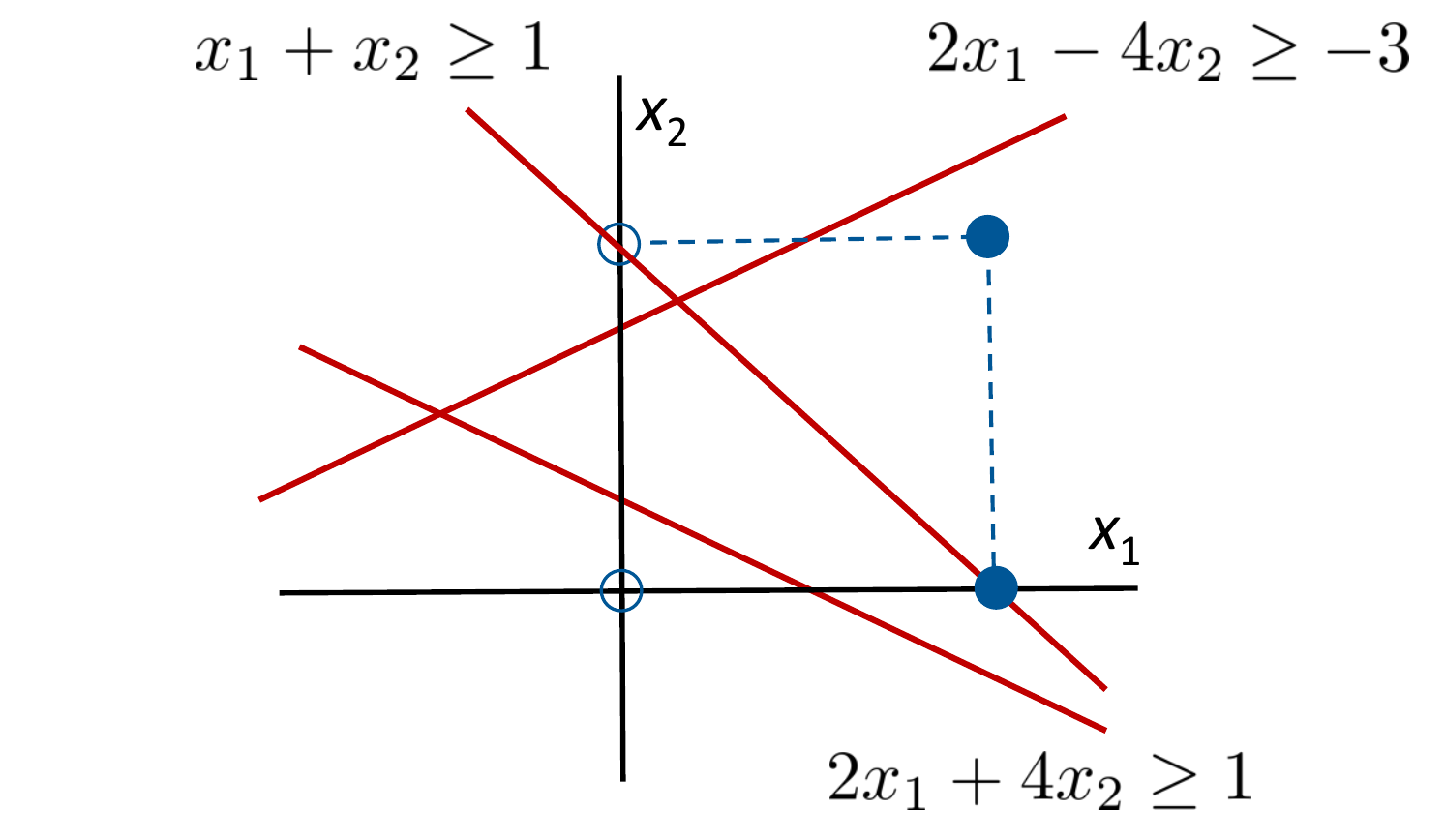}
\caption{Illustration of Example~\ref{ex:counterexample}.}\label{fig:LPconsistency2}
\end{figure}

These examples suggest that resolution is not a natural technique for achieving LP-consistency or even partial LP-consistency.  However, we will see in Section~\ref{LPconsistency} that a technique borrowed from integer programming can achieve partial LP-consistency in a reasonably practical way.


\section{LP-Consistency and Backtracking} \label{sec:partial consistency}

Like full consistency in CP, full LP-consistency is difficult to achieve.  We therefore follow the lead of the CP community and consider weaker forms of consistency.  One that seems appropriate to 0--1 programming is inspired by $k$-consistency.  While even $k$-consistency is hard to achieve in practice, and the CP community focuses on domain consistency instead, a form of LP-consistency analogous to sequential $k$-consistency may be practical for 0--1 programming.

Recall that $\S$ is sequentially $k$-consistent if $D_{J_{k-1}}(\S_{J_{k-1}})=D_{J_k}(\S_{J_k})|_{J_{k-1}}$, and that sequential $k$-consistency for $k=1,\ldots, n$ suffices to avoid backtracking when the branching order is $x_1, \ldots, x_n$.  A parallel definition that relates to linear programming is as follows.
\begin{definition} \label{def:LPconsistency}
A 0--1 constraint set $\S$ is {\em sequentially LP $k$-consistent} if $D_{J_{k-1}}(\S_{\LP})=D_{J_k}(\S_{\LP})|_{J_{k-1}}$.  
\end{definition}
Equivalently, we can say that $\S$ is sequentially LP $k$-consistent if for every 0--1 partial assignment $x_{J_{k-1}}=v_{J_{k-1}}$ that is consistent with $\S_{\LP}$, there is a 0--1 assignment $x_k=v_k$ for which $x_{J_k}=x_{J_k}$ is consistent with $\S_{\LP}$.  Thus sequential LP $k$-consistency is analogous to sequential $k$-consistency but based on the $\S_{\LP}$ relaxation rather than the $S_{J_{k-1}}$ relaxation.

This form of consistency can also allow us to avoid backtracking, if we are willing to solve appropriate LP problems. Specifically, suppose that at a given node in the branching tree, prior branching has fixed $(x_1,\ldots,x_{k-1})=(v_1,\ldots,v_{k-1})$.  For the next branch, we select a value \mbox{$v_k\in\{0,1\}$} for which the partial assignment $(x_1,\ldots,x_k)=(v_1,\ldots,v_k)$ is consistent with $\S_{\LP}$; that is, for which the LP problem $S_{\LP}\cup\{(x_1,\ldots,x_k)=(v_1,\ldots,v_k)\}$ is feasible.  We then set $x_k=v_k$ and continue to the next level of the tree.  The following theorem guarantees that the LP problem will be feasible for at least one value of $v_k$, and that this process avoids backtracking.

\begin{proposition}
If $\S$ is a feasible 0--1 constraint set over $x$ and the branching order is $x_1,\ldots, x_n$, achieving sequential LP $k$-consistency for $k=1,\ldots, n$ suffices to solve $\S$ without backtracking.
\end{proposition}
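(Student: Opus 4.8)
The plan is to mirror the proof of Proposition~\ref{prop:orderconsistent}, simply replacing the relaxation $\S_{J_k}$ by the LP relaxation $\S_{\LP}$ throughout. The one preliminary fact I would record is that for a \emph{full} $0$--$1$ assignment $x=v$ with $v\in\{0,1\}^n$, the system $\S_{\LP}\cup\{x=v\}$ is feasible if and only if $v$ satisfies $Ax\ge b$, since the box constraints $x\in[0,1]^n$ are automatically satisfied at a $0$--$1$ point; in other words, $D_N(\S_{\LP})=D(\S)$. Thus producing a full $0$--$1$ assignment that is consistent with $\S_{\LP}$ is the same thing as producing a feasible solution of $\S$.

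Next I would argue by induction on $k$ that the branching procedure described above --- at the node fixing $(x_1,\dots,x_{k-1})=(v_1,\dots,v_{k-1})$, choose $v_k\in\{0,1\}$ so that $\S_{\LP}\cup\{(x_1,\dots,x_k)=(v_1,\dots,v_k)\}$ is feasible --- can always descend one more level. For the base case, the empty partial assignment is consistent with $\S_{\LP}$ because $\S$ is feasible and every feasible $0$--$1$ point of $\S$ lies in the polytope described by $\S_{\LP}$, so $\S_{\LP}$ is nonempty. For the inductive step, suppose the partial assignment $(x_1,\dots,x_{k-1})=(v_1,\dots,v_{k-1})$ reached so far is consistent with $\S_{\LP}$, i.e.\ $v_{J_{k-1}}\in D_{J_{k-1}}(\S_{\LP})$. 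By sequential LP $k$-consistency, $D_{J_{k-1}}(\S_{\LP})=D_{J_k}(\S_{\LP})|_{J_{k-1}}$, so there is a value $v_k\in\{0,1\}$ with $v_{J_k}\in D_{J_k}(\S_{\LP})$; that is, $(x_1,\dots,x_k)=(v_1,\dots,v_k)$ is consistent with $\S_{\LP}$, and the procedure can legitimately set $x_k=v_k$ and continue.

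Carrying the induction through to $k=n$ produces a full $0$--$1$ assignment $(v_1,\dots,v_n)$ consistent with $\S_{\LP}$, which by the preliminary observation is a feasible solution of $\S$. Since at no node along this path was the procedure forced to report infeasibility and retreat, the search reaches a leaf without ever backtracking, which is the claim.

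I do not expect any genuinely hard step here; like Proposition~\ref{prop:orderconsistent}, this is a routine induction once the relaxations are lined up correctly. The only place that merits care is the terminal condition --- making sure that ``consistent with $\S_{\LP}$'' at the bottom of the tree coincides with true feasibility for $\S$ --- and this is exactly what the observation $D_N(\S_{\LP})=D(\S)$ takes care of. It is worth noting in passing that only sequential LP $k$-consistency along the single branching order $x_1,\dots,x_n$ is used; full LP-consistency is not required.
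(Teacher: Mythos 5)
Your proposal is correct and follows essentially the same inductive argument as the paper's proof: feasibility of $\S$ gives consistency of the empty assignment with $\S_{\LP}$ at the root, sequential LP $k$-consistency extends the partial assignment one level at a time, and the terminal full 0--1 assignment consistent with $\S_{\LP}$ satisfies $\S$. Your explicit observation that $D_N(\S_{\LP})=D(\S)$ is a nice clarification of the final step, which the paper states only in passing.
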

\begin{proof}
Since $\S$ is feasible, $S_{\LP}$ is feasible at the root node of the branching tree, and so the empty assignment is consistent with $\S_{\LP}$.  \mbox{Arguing} inductively, suppose the partial assignment $(x_1,\ldots, x_{k-1})=(v_1,\ldots, v_{k-1})$ that reflects the branching decisions down to the node at level $k$ is consistent with $\S_{\LP}$. Since $\S$ is sequentially LP $k$-consistent, there exists a 0--1 value $v_k$ of $x_k$ for which the partial assignment $(x_1,\ldots,x_k)=(v_1,\ldots,v_k)$ is consistent with $\S_{\LP}$.  By induction, $\S_{\LP}\cup \{(x_1,\ldots,x_n)=(v_1,\ldots,v_n)\}$ is feasible at the terminal node of the tree for some tuple $(v_1,\ldots,v_n)$ of 0--1 values.  But in this case, $(x_1,\ldots,x_n)=(v_1,\ldots,v_n)$ satisfies $\S$, and we have solved the problem without backtracking.
\qed 
\end{proof}

\begin{example}  \label{ex:LPconsistency2}
Consider the constraint set $\S$ of Example~\ref{ex:LPconsistency}.  $\S$ is not sequentially LP 2-consistent because $x_1=0$ is consistent with $\S_{\LP}$, but neither $(x_1,x_2)=(0,0)$ nor $(x_1,x_2)=(0,1)$ is consistent with $\S_{\LP}$.  Also, backtracking is possible, because if we set $x_1=0$ at the root node because $x_1=0$ is consistent with $\S_{\LP}$, we cannot find a consistent value for $x_2$ at the child node and must backtrack.
Now suppose we add the clause $x_1+x_2\geq 1$ to $\S$ to obtain a constraint set $\S'$ that is sequentially LP 2-consistent (Fig.~\ref{fig:LPconsistency2}).  At the root node we must branch on $x_1=1$, because $x_1=0$ is not consistent with $\S'_{\LP}$.  At the child node, branching on $x_2=1$ yields an assignment $(x_1,x_2)=(1,1)$ that is consistent with $\S_{\LP}$ and, in fact, solves $\S$ without backtracking.
\end{example}

\section{Achieving LP Consistency} \label{LPconsistency}

The definition of sequential LP $k$-consistency, Definition~\ref{def:LPconsistency}, already suggests a method for achieving it.  Namely, we wish to obtain the results of lifting the problem from $k-1$ dimensions to $k$ dimensions, and then projecting back onto $k-1$ dimensions.  This operation can be achieved by using basics of the lift-and-project method of \cite{Bal85} as defined next.

To achieve sequential LP $k$-consistency, we proceed as follows.  Given $\S=\{Ax\geq b,\; x\in\{0,1\}^n\}$ where $0 \leq x_i \leq 1$ is included in $Ax \geq b$, we generate the nonlinear system
\begin{align*}
& (Ax-b) x_k \geq 0 \\
& (Ax-b) (1-x_k) \geq 0
\end{align*}
We next linearize the system by replacing each $x_k^2$ with $x_k$, and each product $x_i x_k$ with $y_{\{i,k\}}$.  Let the resulting system be $R_k(\S_{\LP})$.  Finally, project this system onto $x_J$ to obtain the system $R_k(\S_{\LP})|_{J_{k-1}}$.

\begin{proposition}
Given a 0--1 constraint set $\S$, applying the above algorithm and augmenting $\S$ with the constraints in $R_k(\S_{\LP})$ yields a constraint set that is sequentially LP $k$-consistent.
\end{proposition}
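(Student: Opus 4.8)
The plan is to deduce this from the lift-and-project theorem of \cite{Bal85}, specialized to the single binary variable $x_k$. Put $P_0=\S_{\LP}\cap\{x_k=0\}$ and $P_1=\S_{\LP}\cap\{x_k=1\}$, write $\S'$ for $\S$ augmented with the inequalities $R_k(\S_{\LP})|_{J_{k-1}}$ returned by the algorithm (augmenting with the unprojected system $R_k(\S_{\LP})$ leads to the same conclusion, since that system already implies $Ax\geq b$), and let $|_J$ denote orthogonal projection onto the coordinates $x_j$, $j\in J$. First I would record two elementary observations about $R_k(\S_{\LP})$. \textbf{(i)} For each row $a^rx\geq b_r$ of $Ax\geq b$, the linearizations of $(a^rx-b_r)x_k\geq 0$ and of $(a^rx-b_r)(1-x_k)\geq 0$ add up to $a^rx\geq b_r$, so $R_k(\S_{\LP})$ implies $Ax\geq b$; hence $R_k(\S_{\LP})|_N\subseteq\S_{\LP}$ and $R_k(\S_{\LP})|_{J_{k-1}}\subseteq\S_{\LP}|_{J_{k-1}}$. \textbf{(ii)} By \cite{Bal85}, $R_k(\S_{\LP})|_N=\conv(P_0\cup P_1)$; I need only the inclusion ``$\subseteq$'', which is the easy validity direction—any point satisfying $R_k(\S_{\LP})$ exhibits its $x$-part as a convex combination of a point of $P_0$ and a point of $P_1$, obtained by dividing the two product families by $1-x_k$ and by $x_k$ (or read off directly when $x_k\in\{0,1\}$). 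Projecting once more and using that projection distributes over unions and convex hulls gives $R_k(\S_{\LP})|_{J_{k-1}}\subseteq\conv\big(P_0|_{J_{k-1}}\cup P_1|_{J_{k-1}}\big)$.

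Next I would restate the target via the equivalent form of Definition~\ref{def:LPconsistency}: it suffices to show that every $0$--$1$ vector $v_{J_{k-1}}$ consistent with $\S'_{\LP}$ admits a $0$--$1$ value $v_k$ for which $(v_{J_{k-1}},v_k)$ is consistent with $\S'_{\LP}$. Because the new inequalities involve only $x_1,\dots,x_{k-1}$ and, by (i), are implied on that subspace by $\S_{\LP}$, a short check shows that a $0$--$1$ vector $v_{J_{k-1}}$ is consistent with $\S'_{\LP}$ precisely when $v_{J_{k-1}}\in R_k(\S_{\LP})|_{J_{k-1}}$, while $(v_{J_{k-1}},v_k)$ is consistent with $\S'_{\LP}$ precisely when in addition $(v_{J_{k-1}},v_k)\in\S_{\LP}|_{J_k}$. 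Thus the whole statement reduces to one implication: \emph{if a $0$--$1$ vector $v_{J_{k-1}}$ lies in $R_k(\S_{\LP})|_{J_{k-1}}$, then $(v_{J_{k-1}},i)\in\S_{\LP}|_{J_k}$ for some $i\in\{0,1\}$.}

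To prove that implication I would use (ii) to write $v_{J_{k-1}}=\lambda q^{0}+(1-\lambda)q^{1}$ with $\lambda\in[0,1]$, $q^{0}\in P_0|_{J_{k-1}}$, $q^{1}\in P_1|_{J_{k-1}}$, and $q^{0},q^{1}\in[0,1]^{k-1}$ (the last point because the bounds $0\leq x_j\leq 1$ are present). If $\lambda=1$ then $v_{J_{k-1}}=q^{0}\in P_0|_{J_{k-1}}$, so $(v_{J_{k-1}},0)\in\S_{\LP}|_{J_k}$; if $\lambda=0$, symmetrically $(v_{J_{k-1}},1)\in\S_{\LP}|_{J_k}$; and if $0<\lambda<1$, then for each coordinate $j$ the identity $v_j=\lambda q^{0}_j+(1-\lambda)q^{1}_j$ with $v_j\in\{0,1\}$ and $q^{0}_j,q^{1}_j\in[0,1]$ forces $q^{0}_j=q^{1}_j=v_j$, whence $q^{0}=q^{1}=v_{J_{k-1}}$ and again $(v_{J_{k-1}},0)\in\S_{\LP}|_{J_k}$. (More slickly: $v_{J_{k-1}}$ is a vertex of $[0,1]^{k-1}$, hence an extreme point of the polytope $\conv(P_0|_{J_{k-1}}\cup P_1|_{J_{k-1}})$ that contains it, hence already lies in $P_0|_{J_{k-1}}\cup P_1|_{J_{k-1}}$.) This establishes the implication and with it the sequential LP $k$-consistency of $\S'$.

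The one subtle point—the ``hard part,'' such as it is—is lining up ingredient (ii) with the paper's exact recipe: multiply every row of $Ax\geq b$, variable bounds included, by $x_k$ and by $1-x_k$, then replace $x_k^2$ by $x_k$ and $x_ix_k$ by $y_{\{i,k\}}$, and confirm that this is precisely the single-variable instance of the construction of \cite{Bal85} and that the needed inclusion persists in the degenerate cases $P_0=\emptyset$ or $P_1=\emptyset$ (where $R_k(\S_{\LP})|_N$ collapses to $P_1$ or $P_0$ and the conclusion only becomes easier). Everything after that is the coordinatewise rounding argument above and is routine.
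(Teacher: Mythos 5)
Your proposal is correct and follows essentially the same route as the paper's own proof: both invoke the Balas lift-and-project fact that $R_k(\S_{\LP})$ describes $\conv(P_0\cup P_1)$ and then use the integrality of $v_{J_{k-1}}$ (it cannot be a nontrivial convex combination of distinct points of $[0,1]^{k-1}$) to force the convex combination to come from a point of $P_0$ or $P_1$ extending the partial assignment. Your write-up is somewhat more careful than the paper's—handling the $\lambda\in\{0,1\}$ and $P_i=\emptyset$ degeneracies and spelling out why consistency with the augmented LP reduces to membership in the projected sets—but the underlying argument is the same.
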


\begin{proof}
For a given 0--1 partial assignment $x_J=v_J$, suppose that $\S_{\LP}\cup\{(x_{J_k})=(v_{J_k})\}$ is infeasible for $v_k=0,1$.  It suffices to show that $R_k(\S_{\LP})|_{J_{k-1}}\cup\{x_{J_{k-1}}=v_{J_{k-1}}\}$ is infeasible.  It follows from \cite{Bal85} that $R_k(\S_{\LP})$ describes the convex hull of the union of $D(\S_{\LP}\cup\{x_k=v_k\})$ over $v_k=0,1$.  We claim that $x_{J_{k-1}}=v_{J_{k-1}}$ does not satisfy $R_k(\S_{\LP})|_{J_{k-1}}$.  Assume to the contrary.  Then there exists a point $w=(v_{J_{k-1}},\tilde{v}_k,\tilde{v}_K,\tilde{y})$ that satisfies $R_k(\S_{\LP})$, where $K=N\setminus J_k$.  This point must be representable as a convex combination of two points of the form $(v_{J_{k-1}},0,\dot{v}_K,\dot{y})$ and $(v_{J_{k-1}},1,\ddot{v}_K,\ddot{y})$, since the components of $v_{J_{k-1}}$ are integral and cannot be represented as the convex combination of other points.  However, by assumption such points do not exist because $\S_{\LP}\cup\{x_{J_k}=v_{J_k}\}$ is infeasible for $v_k=0,1$.  This yields the desired contradiction.  \qed
\end{proof}

\begin{example} \label{ex:RLT}
Consider again Example~\ref{ex:LPconsistency}, in which 
\[
\S = \{2x_1-4x_2 \leq -1, \; -2x_1+4x_2\leq 3, \; x_1,x_2\in \{0,1\}\}
\]
Recall that $\S$ is not LP 2-consistent because $x_1=0$ is consistent with $\S_{\LP}$ and $(x_1,x_2)=(0,v_2)$ is inconsistent with $\S_{\LP}$ for $v_2=0,1$.  We wish to achieve sequential LP 2-consistency by applying the modified lift-and-project procedure.  First generate the constraints
\[
\begin{array}{l@{\hspace{5ex}}l}
     (2x_1-4x_2+1)x_2 \leq 0 & x_1x_2\geq 0 \vspace{0.5ex} \\
     (2x_1-4x_2+1)(1-x_2) \leq 0 & x_1(1-x_2) \geq 0 \vspace{0.5ex} \\
     (-2x_1+4x_2-3)x_2 \leq 0 & (1-x_1)x_2 \geq 0 \vspace{0.5ex} \\
     (-2x_1+4x_2-3)(1-x_2) \leq 0 & (1-x_1)(1-x_2) \geq 0
\end{array}
\]
After linearizing and writing $y_{\{1,2\}}$ simply as $y$, we obtain the system $R_2(\S_{\LP})$:
\[
\begin{array}{l@{\hspace{5ex}}l}
     -3x_2 + 2y \leq 0 & y\geq 0 \vspace{0.5ex} \\
     2x_1 - x_2 -2y + 1 \leq 0 & x_1 - y \geq 0 \vspace{0.5ex} \\
     x_2 - 2y \leq 0 & x_2 - y \geq 0 \vspace{0.5ex} \\
     -2x_1 + 3x_2 + 2y - 3 \leq 0 & -x_1 - x_2 + y + 1 \geq 0
\end{array}
\]
Finally, projecting onto $x_1$ yields $R_2(\S)|_{\{1\}} = \{\frac{1}{2}\leq x_1\leq 1\}$.  Adding this constraint to $\S_{\LP}$, as illustrated in Fig.~\ref{fig:LPconsistency3}, achieves sequential LP 2-consistency because $x_1=0$ is inconsistent with the resulting constraint set.
\end{example}

\begin{figure}[!b]
\centering
\includegraphics[scale=0.6,clip,trim=240 140 290 150]{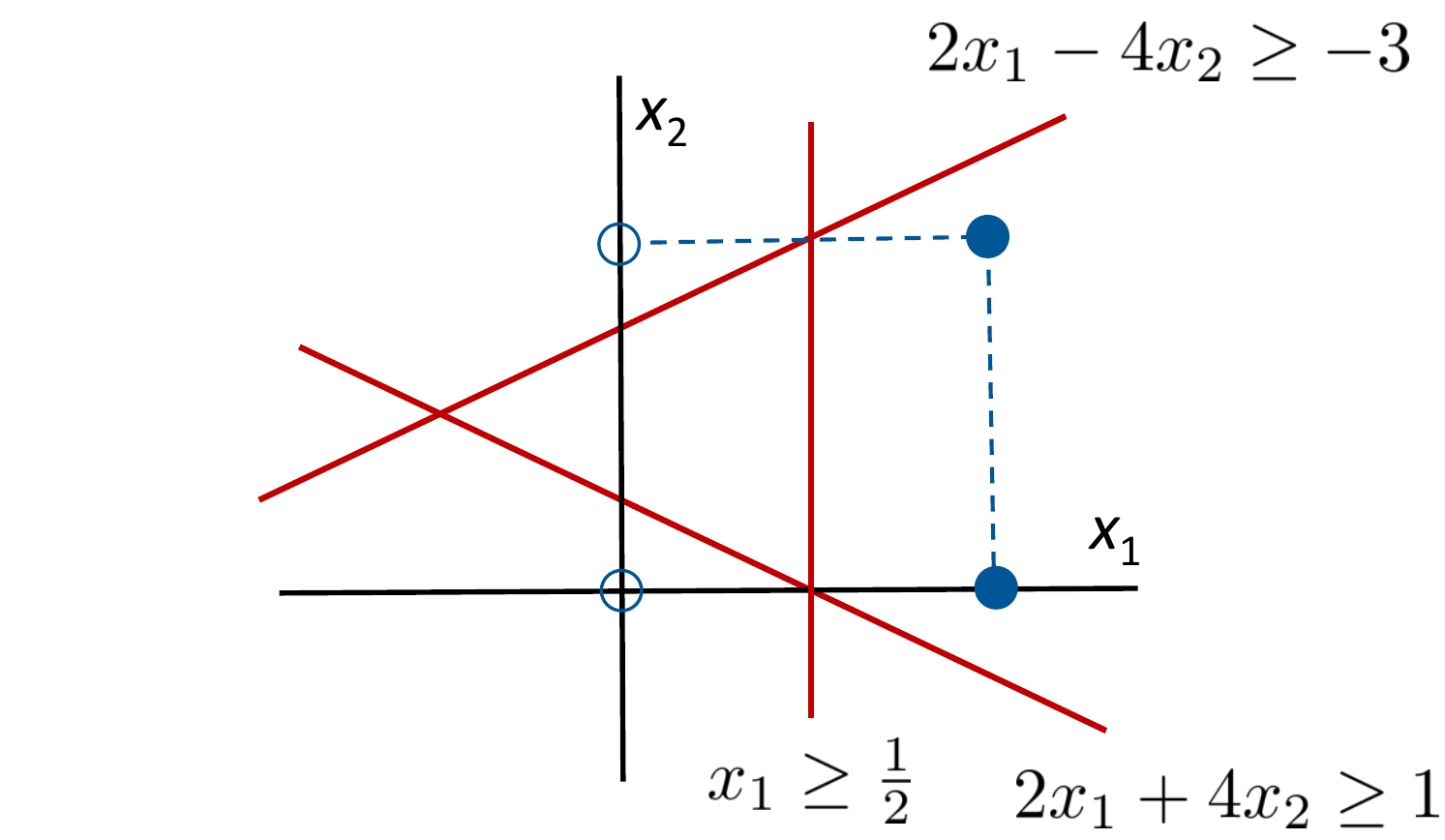}
\caption{Illustration of Example~\ref{ex:RLT}.}\label{fig:LPconsistency3}
\end{figure}

One could, in principle, apply lift-and-project repeatedly to achieve sequential LP $k$-consistency for $k=1,\ldots, n$, which would allow one to avoid backtracking altogether.  This is impractical, however, because the lift-and-project process quickly explodes in complexity as $k$ increases.  However, it may be practical to achieve sequential LP $k$-consistency for a few small $k$, a strategy that has three advantages.  First, it is computationally manageable for sufficiently small $k$.  Second, it generates sparse cuts (cuts with at most $k-1$ terms), which are generally conceived as the most effective type of cuts in branch-and-bound.

The third advantage is that achieving sequential LP-consistency can avoid branching that traditional cutting planes do not avoid, because it focuses on excluding inconsistent partial assignments, rather than on tightening the LP relaxation by cutting off fractional points.  This can be illustrated in a very simple context as follows.

\begin{example} \label{ex:tree}
Suppose we wish to maximize $3x_2-x_1$ subject to the constraint set $\S$ in the previous example.  Suppose further that we apply a traditional branch-and-cut procedure that generates separating disjunctive cuts at the root node (Fig.~\ref{fig:tree}(a)).  The solution of the LP relaxation at the root node is $(x_1,x_2)=(\frac{1}{2},1)$.  The two disjunctive cuts at this node are $-x_1+4x_2\geq -3$ (corresponding to the disjunction $x_1=0 \vee x_1=1$) and $x_1\geq \frac{1}{2}$ (corresponding to $x_2=0 \vee x_2=1$).  Only the first cut is generated, because only it cuts off the fractional solution $(\frac{1}{2},1)$.  This results in a new LP solution $(x_1,x_2)=(0,\frac{3}{4})$.  The procedure then branches on $x_2$.  The $x_2=0$ branch yields the fractional LP solution $(x_1,x_2)=(\frac{1}{2},0)$, and it is necessary to branch on $x_1$.  The $x_2=1$ branch yields the integer LP solution $(x_1,x_2)=(1,1)$, which solves the problem.  The resulting search tree has 5 nodes.

Suppose now that we achieve sequential LP 2-consistency as described in Example~\ref{ex:RLT} by generating the inequality $x_1\geq \frac{1}{2}$, even though it does not cut off the fractional LP solution (Fig.~\ref{fig:tree}(b)).  Since the partial assignment $x_1=0$ is inconsistent with the LP relaxation, we immediately branch on $x_1=1$, which yields the integer LP solution $(x_1,x_2)=(1,1)$.  The problem is solved with only 2 nodes in the search tree, even though we used no traditional separating cuts at all.
\end{example}

\begin{figure}
\centering
\includegraphics[scale=0.5,clip,trim=140 140 240 110]{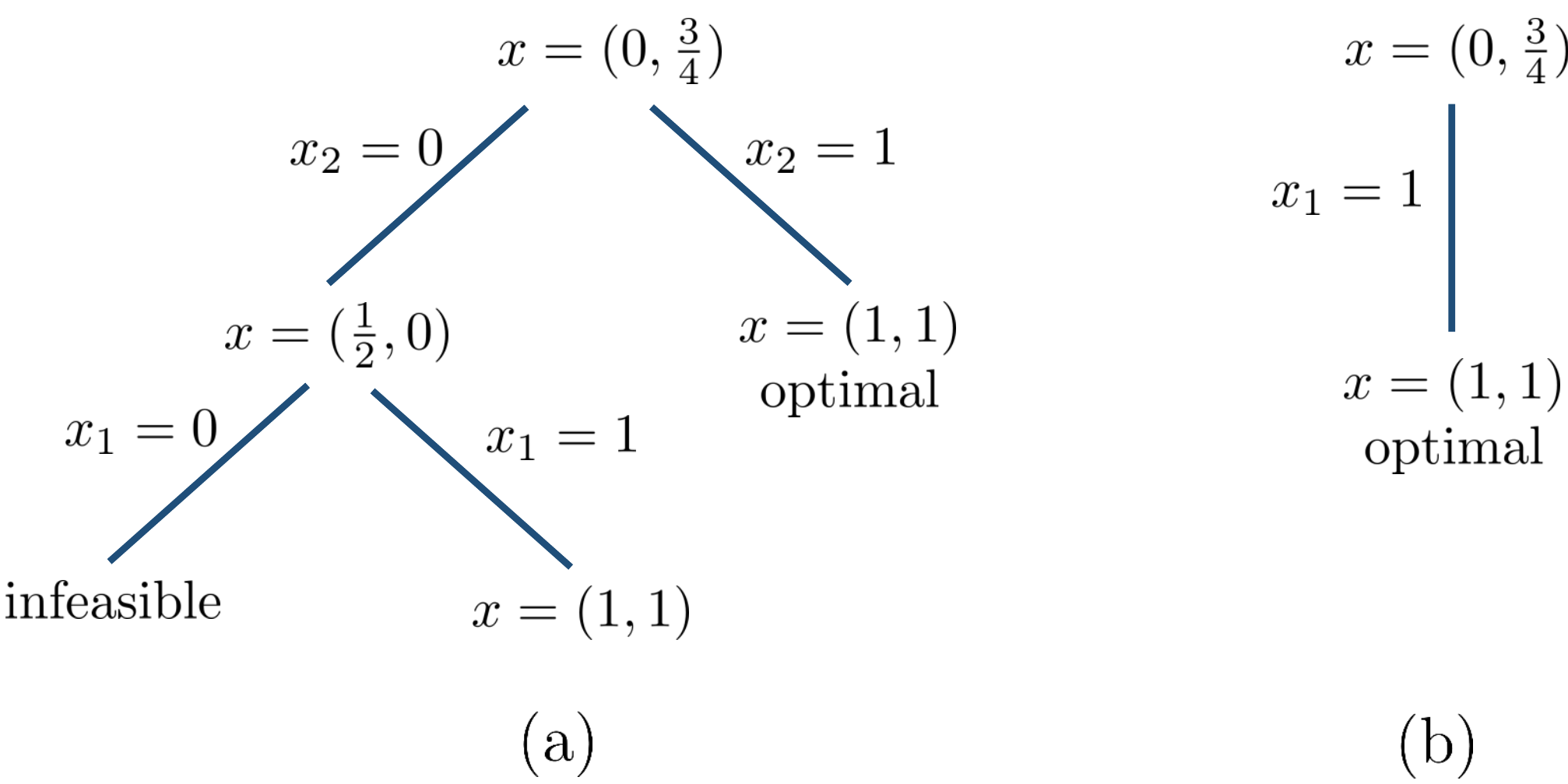}
\caption{Illustration of Example~\ref{ex:tree}}.\label{fig:tree}
\end{figure}

\section{Conclusion}

We provided a theoretical foundation for a new type of consistency, LP-consistency,  that is particularly suited to 0--1 programming.  It is based on the idea that consistency can, in general, be defined with respect to a type of relaxation.  LP-consistency is obtained by replacing the relaxation used for traditional consistency concepts with the LP relaxation.   We also defined sequential LP $k$-consistency, a weaker form of LP-consistency that is easier to achieve but nonetheless reduces backtracking.  In fact, sequential $k$-consistency can be obtained by a restricted form of the well-known lift-and-project process of integer programming. LP-consistency maintenance brings a new approach to 0--1 programming because it focuses on eliminating inconsistent 0--1 partial assignments rather than fractional solutions of the LP relaxation.  We showed that achieving even sequential \mbox{2-consistency} can avoid backtracking that traditional cutting planes allow. 

This work points to at least three further research programs.  One is to extend the concepts introduced here to general mixed integer/linear programming (MILP), which appears to be straightfoward.  A second is to investigate the computational usefulness of sequential LP $k$-consistency for MILP solvers, in particular by achieving sequential LP $k$-consistency for small $k$ near the top of the search tree.  A third is to conduct a systematic study of the ability of traditional cutting planes to achieve consistency, both traditional forms and LP-consistency, in an MILP problem.  This could allow one to make better use of known cutting planes by generating cuts that do not separate fractional solutions but enhance the consistency properties of the constraint set.



\begin{thebibliography}{10}
\providecommand{\url}[1]{\texttt{#1}}
\providecommand{\urlprefix}{URL }
\providecommand{\doi}[1]{https://doi.org/#1}

\bibitem{Apt03}
Apt, K.R.: Principles of Constraint Programming. Cambridge University Press,
  Cambridge, UK (2003)

\bibitem{Bal85}
Balas, E.: Disjunctive programming and a hierarchy of relaxations for discrete
  optimization problems. SIAM Journal on Algebraic and Discrete Methods
  \textbf{6},  466--485 (1985)

\bibitem{Cha70}
Chang, C.L.: The unit proof and the input proof in theorem proving. Journal of
  the ACM  \textbf{14},  698--707 (1970)

\bibitem{Chv73}
Chv\'{a}tal, V.: Edmonds polytopes and a hierarchy of combinatorial problems.
  Discrete Mathematics  \textbf{4},  305--337 (1973)

\bibitem{Dav87}
Davis, E.: Constraint propagation with intervals labels. Artificial
  Intelligence  \textbf{32},  281--331 (1987)

\bibitem{Fre78}
Freuder, E.C.: Synthesizing constraint expressions. Communications of the ACM
  \textbf{21},  958--966 (1978)

\bibitem{Fre82}
Freuder, E.C.: A sufficient condition for backtrack-free search. Communications
  of the ACM  \textbf{29},  24--32 (1982)

\bibitem{Hoo89}
Hooker, J.N.: Input proofs and rank one cutting planes. ORSA Journal on
  Computing  \textbf{1},  137--145 (1989)

\bibitem{Hooker12}
Hooker, J.N.: Integrated Methods for Optimization, 2nd ed. Springer (2012)

\bibitem{Hoo16}
Hooker, J.N.: Projection, consistency, and {George Boole}. Constraints
  \textbf{21},  59--76 (2016)

\bibitem{Mac77}
Mackworth, A.: Consistency in networks of relations. Artificial Intelligence
  \textbf{8},  99--118 (1977)

\bibitem{Mon74}
Montanari, U.: Networks of constraints: Fundamental properties and applications
  to picture processing. Information Science  \textbf{7},  95--132 (1974)

\bibitem{Qui52}
Quine, W.V.: The problem of simplifying truth functions. American Mathematical
  Monthly  \textbf{59},  521--531 (1952)

\bibitem{Qui55}
Quine, W.V.: A way to simplify truth functions. American Mathematical Monthly
  \textbf{62},  627--631 (1955)

\bibitem{Reg10}
{R\'e}gin, J.C.: Global constraints: A survey. In: Milano, M., {Van
  Hentenryck}, P. (eds.) Hybrid Optimization: The Ten Years of CPAIOR, pp.
  63--134. Springer, New York (2010)

\bibitem{Tsa93}
Tsang, E.: Foundations of Constraint Satisfaction. Academic Press, London
  (1983)

\bibitem{Hen89}
{Van Hentenryck}, P.: Constraint Satisfaction in Logic Programming. {MIT}
  Press, Cambridge, {MA} (1989)

\end{thebibliography}

\end{document}